\theoremstyle{plain}
\newtheorem{theorem}{Theorem}[section]
\newtheorem{proposition}{Proposition}[section]
\newtheorem{lemma}[theorem]{Lemma}
\newtheorem{corollary}[theorem]{Corollary}
\theoremstyle{definition}
\newtheorem{definition}[theorem]{Definition}
\theoremstyle{remark}
\newcommand   {\E}          {\operatorname*{\mathbb{E}}}
\renewcommand {\P}          {\operatorname*{\mathbb{P}}}
\renewcommand {\max}        {\operatorname*{max}}
\newcommand   {\argmax}     {\operatorname*{argmax}}
\newcommand   {\intersect}  {\,\cap\,}
\newcommand   {\supp}       {\operatorname{supp}}
\newcommand   {\paren}  [1] {\left( #1 \right)}
\newcommand   {\set}    [1] {\left\{ #1 \right\}}
\begin{document}
\title{Delegated Stochastic Probing}
\author{
    Curtis Bechtel\thanks{supported by NSF Grants CCF-1350900 and CCF-2009060.}\\
    Department of Computer Science \\
    University of Southern California \\
    {\tt bechtel@usc.edu}
      \and
    Shaddin Dughmi\thanks{supported by NSF CAREER Award CCF-1350900 and NSF Grant CCF-2009060.} \\
    Department of Computer Science \\
    University of Southern California \\
    {\tt shaddin@usc.edu}
}
\date{December, 2020}
\maketitle

\begin{abstract}
    Delegation covers a broad class of problems in which a principal doesn't have the resources or expertise necessary to complete a task by themselves, so they delegate the task to an agent whose interests may not be aligned with their own. Stochastic probing describes problems in which we are tasked with maximizing expected utility by ``probing'' known distributions for acceptable solutions subject to certain constraints. In this work, we combine the concepts of delegation and stochastic probing into a single mechanism design framework which we term delegated stochastic probing. We study how much a principal loses by delegating a stochastic probing problem, compared to their utility in the non-delegated solution. Our model and results are heavily inspired by the work of Kleinberg and Kleinberg in ``Delegated Search Approximates Efficient Search.'' Building on their work, we show that there exists a connection between delegated stochastic probing and generalized prophet inequalities, which provides us with constant-factor deterministic mechanisms for a large class of delegated stochastic probing problems. We also explore randomized mechanisms in a simple delegated probing setting, and show that they outperform deterministic mechanisms in some instances but not in the worst case.
\end{abstract}

\section{Introduction}

The division of labor and responsibility, based on expertise, is a defining characteristic of efficient organizations and productive economies. In the context of economic decision-making, such division often manifests through \emph{delegation} scenarios of the following form: A decision maker (the \emph{principal}), facing a multivariate decision beset by constraints and uncertainties, tasks an expert (the \emph{agent}) with collecting data, exploring the space of feasible  decisions, and proposing a solution.

As a running example, consider the leadership of a firm delegating some or all of its hiring decisions to an outside recruitment agency. When the principal and the agent have misaligned utilities --- such as when the agency must balance the firm's preferences with its own preferences over, or obligations towards, potential hires --- the principal faces a mechanism design problem termed \emph{optimal delegation} (see e.g. \cite{holmstrom78, armstrong10}). When the underlying optimization problem involves multiple inter-dependent decisions, such as when hiring a team which must collectively cover a particular set of skills, and when data collection is constrained by logistical or budget considerations,  the problem being delegated fits in the framework of \emph{stochastic probing}, broadly construed (see e.g.~\cite{singla_thesis}).

The present paper is concerned with the above-described marriage of optimal delegation and stochastic probing. We restrict attention to protocols without payments,  drawing our inspiration from the recent work of Kleinberg and Kleinberg \cite{KK18}. The underlying (non-delegated) problem faced by the principal in their ``distributional model'' is the following: facing $n$ i.i.d rewards, select the ex-post best draw. As for their ``binary model'', there are $n$ random rewards with binary support, and a cost associated with sampling each; the goal is to adaptively sample the rewards and select one, with the goal of maximizing the ex-post selected reward less sampling costs. For both models, they show that delegating the problem results in a loss of at most half the principal's utility. Their analysis in both cases is through a reduction to the (classical) single-choice \emph{prophet inequality} problem, and in particular to the threshold stopping rule of Samuel-Cahn~\cite{SC84}.

Both the distributional and binary models of \cite{KK18} can be  viewed as stochastic probing problems, the former being trivial in the absence of delegation, and the latter corresponding to a special case of the well-studied box problem of Weitzman~\cite{weitzman79}. A number of stochastic probing problems have been known to reduce to \emph{contention resolution schemes} (e.g. \cite{probing, probing_gns16,matching_heaven,ROCRS,OCRS}), which in turn reduce to generalizations of the prophet inequality \cite{OCRS_prophet}. This suggests that the results of \cite{KK18} might apply more broadly.

It is this suggestive thread which we pull on in this paper, unraveling what is indeed a broader phenomenon. We study optimal delegation for a fairly general class of stochastic probing problems with combinatorial constraints, and obtain delegation mechanisms which approximate, up to a constant, the principal's non-delegated utility. Building on recent progress in the literature on stochastic optimization, our results reduce delegated stochastic probing to \emph{generalized prophet inequalities} of a particular ``greedy'' form, as well as to the notion of \emph{adaptivity gap} (e.g. \cite{adaptivity, multivalue_adaptivity}).

\subsection{Our Model}

Our model features a collection of \emph{elements}, each of which is associated with a (random) \emph{utility} for each of the principal and the agent. We assume that different elements are independently distributed, though the principal's and the agent's utilities for the same element may be correlated. We allow constraining both the sampled and the selected set of elements via \emph{outer} and \emph{inner} constraints, respectively. Each constraint is a downwards-closed set system on the ground set of elements. A \emph{probing algorithm} for an instance of our model adaptively \emph{probes} some set of elements subject to the outer constraint, learning their associated utilities in the process. The algorithm then selects as its \emph{solution} a subset of the probed elements satisfying the inner constraint. We assume that, for both the principal and the agent, utility for a solution is the sum of its per-element utilities.

To situate the non-game-theoretic component of our model within the literature on stochastic probing problems, note that we allow an arbitrary utility distribution for each element, rather than a binary-support distribution characterizing ``feasibility''.  Moreover, unlike ``probe and commit'' models, we also allow our algorithm to select its solution after all probes are complete.  In both these respects, our model is akin to the stochastic multi-value probing model of \cite{multivalue_adaptivity}. As for our game-theoretic modeling, we assume that the utility distributions, as well as the inner and outer constraints, are common knowledge. The realized utilities, however, are only observed by the agent upon probing.

In the traditional (non-delegation) setting, the principal implements the probing algorithm optimizing her own utility, in expectation. In the delegation setting, the principal and agent engage in the following Stackelberg game. The principal moves first by \emph{committing} to a \emph{policy}, or \emph{mechanism}. Such a policy is a (possibly randomized) map from a set of \emph{signals} to solutions satisfying the inner constraint, with each element in the solution labeled with its (presumptive) utility for both the principal and the agent. Moving second, the agent probes some set of elements subject to the outer constraint, and maps the observed utilities to a signal. The outcome of the game is then the solution which results from applying the principal's policy to the agent's signal. We assume that the principal and agent utilities are additive across elements in the solution, so long as it is labeled with the true per-element utilities. Otherwise, we assume that the principal can detect this discrepancy and effectively ``quit'' the game, imposing a utility of zero for both parties. We adopt the perspective of the principal, who seeks a policy maximizing her expected utility. The agent, naturally, responds with a strategy maximizing his own expected utility given the policy.

By an argument analogous to that in \cite{KK18}, which we prove in our general setting for completeness' sake, we can restrict attention to \emph{single-proposal mechanisms}. In a deterministic single-proposal mechanism, the set of signals is a ``menu'' of \emph{acceptable}  (labeled) solutions satisfying the inner constraint, as well as a ``null'' signal which in our setting we can take to be the empty set. The agent, facing such a mechanism, without loss simply implements a probing algorithm to compute a ``proposed'' solution, tagging each element in the solution with its observed utilities, and ensuring that the solution is acceptable to the principal. We also consider randomized single-proposal mechanisms, where the menu consists of acceptable \emph{lotteries} (i.e., distributions) over (labeled) solutions, and an agent's probing algorithm proposes a lottery on the menu.

\subsection{Our Results}

We study delegation mechanisms which approximate the principal's non-delegated utility. We refer to the best multiplicative approximation factor as the \emph{delegation gap} of the associated instance.

Our main set of results concern the design of deterministic single-proposal mechanisms which prove constant delegation gaps for natural classes of inner and outer constraints. Our approach is modular, and reduces a (constructive) $\alpha \beta$ bound  on  the delegation gap to a (constructive) $\alpha$ generalized prophet inequality of a particular form on the inner constraint, and a (constructive) bound of $\beta$ on the adaptivity gap associated with the outer constraint and the rank function of the inner constraint. Drawing on recent work in \cite{OCRS}, which derives prophet inequalities of our required form, and in \cite{adaptivity, multivalue_adaptivity}, which bounds the adaptivity gap, we obtain constant bounds on the delegation gap for instances of our model with a variety of inner and outer constraints such as matroids and their intersections, among others.

We also begin an exploration of randomized single-proposal mechanisms, where the principal's menu consists of acceptable lotteries over solutions. We show that, even in the simple setting of no outer constraint and a $1$-uniform inner constraint, there are instances for which randomized mechanisms significantly outperform optimal deterministic ones. Nevertheless, there exist worst-case instances where both deterministic and randomized mechanisms suffer a $1/2$ delegation gap. We leave open whether randomized mechanisms can lead to better bounds on the worst-case delegation gap for more intricate classes of inner and outer constraints.

\subsection{Additional Discussion of Related Work}

Since the economic literature on delegation is extensive, we only describe a select sample here. The groundwork for the formal study of optimal delegation in economics was initially laid by Holstrom~\cite{holmstrom78, holmstrom80}. Subsequent work in economics has considered a variety of optimization problems as the task being delegated (e.g. \cite{alonso08,melumad91,armstrong10}). We mention the work of Kovac and Mylovanov~\cite{lottery_delegation} as being related to our results in Section~\ref{lottery-mechanisms}: To our knowledge, they were the first to examine the power of randomized mechanisms for delegation.

Most relevant to the present paper is the aforementioned work of Kleinberg and Kleinberg~\cite{KK18}, who examine approximations for optimal delegation. Their distributional model is closely related to the model of Armstrong and Vickers~\cite{armstrong10}, and the optimization problem being delegated in their binary model is a special case of Weitzman's box problem~\cite{weitzman79}. Both optimization problems fit nicely in the general literature on stochastic probing (see e.g. \cite{singla_thesis}), motivating our examination of delegated stochastic probing more broadly.

Also related is the recent work of Khodabakhsh et al~\cite{threshold_delegation}, who consider a very general model of delegation with discrete actions and states of the world, and an agent who fully observes the state (no outer constraints or sampling costs). They show optimal delegation to be NP-hard and examine limited ``bias'' assumptions under which simple threshold mechanisms are approximately optimal. Notably, they don't impose sampling constraints on the agent and their approximations are with respect to the optimal delegation policy rather than the optimal non-delegated policy. For these reasons, our results are not directly comparable.

The optimization problems being delegated in our model fit in the broad class of \emph{stochastic probing} problems. We do not attempt a thorough accounting of this literature, and instead refer the reader to related work discussions in \cite{singla_thesis, multivalue_adaptivity}. To our knowledge, the term ``stochastic probing'' was originally coined by Gupta and Nagarajan \cite{probing}, though their binary probe-and-commit model is quite different from ours. More closely related to us are the models of \cite{multivalue_adaptivity, adaptivity}, which capture stochastic probing problems with multi-valued reward distributions, no commitment, and combinatorial inner and outer constraints.

As previously mentioned, our work draws on the literature on prophet inequalities. The foundational result in this setting is the (single-choice) prophet inequality of Krengel, Sucheston, and Garling~\cite{KS77, KS78}. \emph{Generalized prophet inequalities}, with respect to various combinatorial constraints, adversaries, and arrival models, have received much attention in the last decade (e.g. \cite{HKS07, matroid_prophet, prophet_easy, OCRS}); the associated body of work is large, and we recommend the survey by \cite{prophet_survey_lucier}. Closely related to generalized prophet inequalities are \emph{contention resolution schemes} (see e.g. \cite{CRS, OCRS, ROCRS}), with reductions going in both directions~\cite{OCRS_prophet}. Key to our results are the ``greedy'' generalized prophet inequalities, derived through ``greedy'' contention resolution, by Feldman et al~\cite{OCRS}.

Finally, we briefly elaborate on the relationship between our model and the two models of Kleinberg and Kleinberg \cite{KK18}. The natural variant of their binary model which replaces sampling costs with combinatorial constraints on the set of samples (outer constraints, in our nomenclature) fits squarely in our model. Their distributional model, which allows $n$ i.i.d. samples from a distribution over utility pairs, initially appears to be a special case of ours. However, our principal is afforded additional power through their ability to distinguish elements by name alone. Nevertheless, we recover their main result as a special case of ours by observing that our mechanism treats elements symmetrically.

\section{Preliminaries}
\label{preliminaries}

Sections \ref{set-systems-and-matroids}, \ref{prophet-inequalities}, and \ref{adaptivity-gap} include brief introductions to some of the key ideas and notations used in this paper. Notably, Section \ref{prophet-inequalities} defines the key notion of ``greedy'' prophet inequalities.

\subsection{Set Systems}
\label{set-systems-and-matroids}

A \emph{set system} is a pair $(E, \mathcal{I})$ where $E$ is a finite set of \emph{elements} and $\mathcal{I} \subseteq 2^{E}$ is a family of \emph{feasible} sets. We focus on \emph{downwards-closed} set systems, satisfying the following two conditions: (1) $\emptyset \in \mathcal{I}$, i.e. the empty set is feasible, and (2) if $T \in \mathcal{I}$ then $S \in \mathcal{I}$ for all $S \subseteq T$, i.e. any subset of a feasible set is feasible. Matroids, matching constraints, and knapsack constraints are all examples of downwards-closed set systems.

For a set system $\mathcal{M}=(E,\mathcal{I})$ and $F \subseteq E$, we use $\mathcal{M}|F = (F,\mathcal{I} \intersect 2^F)$ to denote the \emph{restriction} of $\mathcal{M}$ to $F$.

\subsection{Prophet Inequalities}
\label{prophet-inequalities}

A \emph{generalized prophet inequality problem} is given by a set system $\mathcal{M} = (E, \mathcal{I})$, and for each element $e \in E$ an independent random variable $X_e$ supported on the nonnegative real numbers. Here we adopt the perspective of a \emph{gambler}, who is given $\mathcal{M}$ and the distributions of the random variables $\set{X_e}_{e \in E}$ in advance, then  encounters the elements $E$ in an order chosen by an \emph{adversary}. On encountering $e$, the gambler observes the realization $x_e$ of the random variable $X_e$, and must immediately decide whether to \emph{accept} $e$, subject to the accepted set $S$ of elements remaining feasible in $\mathcal{M}$.  The gambler seeks to maximize his utility $x(S)=\sum_{e \in S} x_e$, and in particular to compete with a \emph{prophet} who knows the realization of all random variables in advance. If the gambler can guarantee an $\alpha$ fraction of the prophet's utility in expectation, we say that we obtain a generalized prophet inequality with a factor of $\alpha$.

For each possible realization $x_e$ of $X_e$, we refer to the pair $(e, x_e) \in E \times \mathbb{R}_+$ as an \emph{outcome}. When the gambler accepts $e \in E$ given a realization $x_e$ of $X_e$, we also say the gambler accepts the outcome~$(e,x_e)$.

Although it is most common to consider an adversary who fixes an order of the elements upfront, some recent work has investigated much more powerful adversaries \cite{matroid_prophet, OCRS}. In this paper, we are interested in the \emph{almighty adversary}, who knows in advance the realizations of all random variables as well as any random coin flips used by the gambler's strategy. The almighty adversary can perfectly predict the future and choose a truly worst-case ordering.

Key to our results is the notion of a ``greedy'' strategy for the gambler. We take inspiration from \cite{OCRS}, who defined greedy online contention resolution schemes, and extend their definition to prophet inequality problems.

\begin{definition}
    Fix any instance of a generalized prophet inequality problem. A \emph{greedy} strategy for the gambler is described by a downwards-closed family $\mathcal{A} \subseteq 2^{E \times \mathbb{R}_+}$ of sets of outcomes.  A gambler following greedy strategy $\mathcal{A}$ accepts an outcome $(e,x_e)$ if and only if the set of all accepted outcomes remains in $\mathcal{A}$.
\end{definition}

We note that Samuel-Cahn's~\cite{SC84} threshold rule for the single-choice prophet inequality is greedy, and its competitive factor of $\frac{1}{2}$ holds for the almighty adversary \cite{roughgarden_twenty_lectures}. More generally, Feldman et al. \cite{OCRS} show that there exist constant-factor greedy prophet inequalities against the almighty adversary for many classes of constraints.

\subsection{Adaptivity Gap}
\label{adaptivity-gap}

Another key notion we will use is the adaptivity gap for stochastic set function optimization problems. For a detailed introduction, see \cite{adaptivity}.

We consider maximizing a stochastic set function $f: 2^E \to \mathbb{R}_+$ constrained by a downwards-closed set system $\mathcal{M} = (E, \mathcal{I})$. We assume $f$ is determined by a collection $\set{X_e}_{e \in E}$ of independent random variables, with the stipulation that $f(S)$ does not depend on any random variables $X_e$ for which $e \notin S$.\footnote{In other words, one can evaluate $f(S)$ given access to the realizations of the random variables $\set{X_e}_{e \in S}$.} We are tasked with ``probing'' some $S \subseteq E$, feasible for $\mathcal{M}$, with the goal of maximizing $f(S)$. An \emph{adaptive} algorithm for this problem probes elements one at a time, where probing $e$ results in learning the realization of $X_e$. Such an algorithm can use the realizations of probed variables to decide on a next element to probe. A \emph{non-adaptive} algorithm chooses the set $S$ all at once, independently of the random variables $\set{X_e}_{e \in E}$.
The \emph{adaptivity gap} is the minimum (worst-case) ratio of the expected value of the optimal non-adaptive algorithm versus the expected value of the optimal adaptive algorithm.

In \cite{adaptivity}, Asadpour and Nazerzadeh showed that the adaptivity gap for instances with monotone submodular functions and matroid constraints is $1-\frac{1}{e}$. Furthermore, they provided an efficient non-adaptive algorithm that achieves this bound. Finally, in \cite{multivalue_adaptivity}, Bradac et al. showed that the adaptivity gap is constant for instances with ``prefix-closed'' constraints (which include all downward-closed constraints) and functions that are the weighted rank function of the intersection of a constant number of matroids.

\section{Model}

\subsection{Formal Definition}
\label{formal-definition}

\begin{definition}
    \label{def:deterministic-model}
    
    An instance $I$ of the \emph{delegated stochastic probing problem} consists of: two players, which we will call the \emph{principal} and the \emph{agent}; a ground set of elements $E$; mutually independent distributions $\mu_e$ with support in $\mathbb{R}_+ \times \mathbb{R}_+$ for each element $e \in E$; an \emph{outer} constraint $\mathcal{M}_{\text{out}} = (E, \mathcal{I}_{\text{out}})$ with feasible sets $\mathcal{I}_{\text{out}}$; and an \emph{inner} constraint $\mathcal{M}_{\text{in}} = (E, \mathcal{I}_{\text{in}})$ with feasible sets $\mathcal{I}_{\text{in}}$.
\end{definition}

Given such an instance, we will additionally define: $(X_e, Y_e) \sim \mu_e$ as random variables denoting the utilities for the principal and agent of element $e$; $\Omega$ as the set of \emph{outcomes} $(e, x, y)$ for all $e \in E$ and all $(x, y) \in \supp(\mu_e)$; and $\Omega_{\text{in}} \subseteq 2^\Omega$ as the family of all sets of outcomes whose elements are distinct and feasible in the inner constraint. For convenience, we will also overload notation by considering $x$ and $y$ to be utility functions for the principal and agent. Given any subset of outcomes $T \subseteq \Omega$, let $x(T) = \sum_{(e, x, y) \in T} x$ and $y(T) = \sum_{(e, x, y) \in T} y$ be the total utility of outcomes in $T$. Similarly for any subset of elements $F \subseteq E$, let $x(F) = \sum_{e \in F} X_e$ and $y(F) = \sum_{e \in F} Y_e$ be random variables representing the randomized total utility of elements in $F$.

A natural mechanism that the principal might choose to implement is called a single-proposal mechanism. Here, the principal describes the space of solutions she is willing to accept, and then the agent uses this information to search the solution space and propose a single feasible solution.

In the deterministic single-proposal setting, the principal first commits to a family of sets of outcomes $\mathcal{R} \subseteq \Omega_{\text{in}}$ and announces $\mathcal{R}$ to the agent. The sets in $\mathcal{R}$ are called \emph{acceptable}, and the principal's choice of $\mathcal{R}$ is called their \emph{policy} (or \emph{mechanism}). After learning $\mathcal{R}$, the agent will select elements to probe, so long as each element is probed at most once and the set of probed elements is feasible in $\mathcal{M}_{\text{out}}$. We allow the agent to probe adaptively, deciding what to do next based on previously probed elements. Let's say that they probe elements $F \subseteq E$ and obtain outcomes $S \subseteq \Omega$. The agent will then choose some set of outcomes $T \subseteq \Omega$ and \emph{propose} it to the principal. If $T$ is acceptable and also a subset of $S$ then the principal and agent receive $x(T)$ and $y(T)$ utility, respectively. Otherwise, they both receive $0$ utility.

In the above-described mechanism design setting, we assume that both the principal and agent act to maximize their expected utility. We also assume that all parameters of the problem, except for the realizations of the random variables, are common knowledge.

We note that, similar to the setup in \cite{KK18}, our model assumes that our agent cannot benefit from lying, say by labeling an element $e$ with utilities other than $X_e$ and $Y_e$, or by proposing an element he has not probed. We argue that this is a natural assumption to make: In many applications we foresee (e.g., a firm hiring an employee, or exploring some mergers), a proposal will be accompanied by an easy to verify proof of the claimed utilities (e.g., in the form of a CV for the applicant, or a detailed analysis of the merger).

As in \cite{KK18}, we compare delegation mechanisms against the optimal \emph{non-delegated} strategy. By non-delegated strategy, we mean the strategy of the principal when they act as both the principal and agent (i.e. they have power to probe and propose as well as accept outcomes).

Given any $F \subseteq E$, let $u(F)$ be the optimal utility of the non-delegating principal when they probe elements in $F$ and accept their own favorite set of outcomes, and let $v_{\mathcal{R}}(F)$ be the utility of the delegating principal with policy $\mathcal{R}$ when the agent probes elements in $F$ and proposes their favorite acceptable set of outcomes. We can write $u$ and $v_{\mathcal{R}}$ as
\begin{align*}
    u(F) &= \max_{G \subseteq F, G \in \mathcal{I}_{\text{in}}} x(G) \\
    v_{\mathcal{R}}(F) &= x\paren{\argmax_{G \subseteq F, \Omega_G \in \mathcal{R}} y(G)},
\end{align*}
where $\Omega_G \subseteq \Omega$ is the set of outcomes from the probed set of elements $G$. In the case of ties in the definition of $v_{\mathcal{R}}$, our results hold for arbitrary (even adversarial) tie-breaking.

\begin{definition}
    Fix any instance of delegated stochastic probing. Let $F^*$ be a random variable containing the elements probed by an optimal adaptive non-delegating principal, and let $F^*_{\mathcal{R}}$ be a random variable containing the elements probed by an optimal adaptive agent under policy $\mathcal{R}$. Then for any policy $\mathcal{R}$ and $\alpha \in [0, 1]$, we say that $\mathcal{R}$ is an \emph{$\alpha$-policy} for this instance if
    \begin{equation*}
        \E v_{\mathcal{R}}(F^*_{\mathcal{R}}) \ge \alpha \E u(F^*).
    \end{equation*}
\end{definition}

\begin{definition}
    The \emph{delegation gap} of a family of instances of delegated stochastic probing is the minimum, over all instances in the family, of the maximum $\alpha$ such that there exists an $\alpha$-policy for that instance. This gap measures the fraction of the principal's non-delegated utility they can achieve when delegating.
\end{definition}

\subsection{Signaling Mechanisms}
\label{single-proposal-mechanisms}

Having formally defined the model, we will now describe a broad generalization of single-proposal mechanisms, called \emph{signaling mechanisms}, and show that these mechanisms don't provide the principal with any additional power. Note that this discussion is inspired by Section 2.2 from \cite{KK18}, and we simply extend their work to our model.

A signaling mechanism allows the principal to ask the agent for more (or different) information than just a proposed solution. The principal will then take this information and transform it into a solution, which they will accept. One might suspect that expanding the space of mechanisms in this way would give the principal more power. However, as we will show, this isn't the case even for a broad class of delegation models, which we will now define formally.

\begin{definition}
    \label{def:generalized-delegation-problem}
    An instance of the \emph{generalized delegation problem} consists of two players called the \emph{principal} and the \emph{agent}, a state space $S$, a solution space $\Psi$, a set $\mathcal{P}$ of \emph{probing strategies} for the agent, a \emph{signaling function} $\sigma$ which maps $\mathcal{P} \times S$ to strings, a utility function $x : S \times \mathcal{P} \times \Psi \to \mathbb{R}_+$ for the principal, and a utility function $y : S \times \mathcal{P} \times \Psi \to \mathbb{R}_+$ for the agent. We require that there is a \emph{null solution} $\bot \in \Psi$ such that $x_{s,p}(\bot) = y_{s,p}(\bot) = 0$ for all $s \in S$ and $p\in \mathcal{P}$.
    
    We assume the state of the world is some $s \in S$ a-priori unknown to the principal and the agent, though they may have prior information. The agent obtains information about $s$ by applying a probing strategy $p \in \mathcal{P}$ to obtain a signal $\sigma_p(s)$.  For a state $s \in S$, a probing strategy $p \in \mathcal{P}$ chosen by the agent, and a solution $\psi \in \Psi$, we associate a utility of $x_{s,p}(\psi)$ and $y_{s,p}(\psi)$ for the principal and the agent, respectively.
\end{definition}

We note that the above definition generalizes the delegation problems of Definition \ref{def:deterministic-model}. In particular: the state space $S$ represents all possible realizations of per-element utilities of the principal and the agent; the solution space $\Psi$ is the family of feasible subsets of outcomes $\Omega_{\text{in}}$, where $\bot$ is the empty set of outcomes;  $\mathcal{P}$ corresponds to probing algorithms which respect the outer constraint; $\sigma_p(s)$ is the set of outcomes obtained by invoking algorithm $p$ in state $s$; both utility functions depend on the state $s \in S$ and the probing algorithm  $p \in \mathcal{P}$, evaluating to $0$ for solutions $\psi$ that are inconsistent with the state $s$, or if the probing algorithm $p$ applied to $s$ does not the probe the elements in $\psi$.

Given a generalized delegation problem, we define signaling mechanisms as follows.

\begin{definition}
    \label{def:signaling-mechanism}
    Fix some instance of the generalized delegation problem. A \emph{signaling mechanism} proceeds in the following manner. The principal starts by choosing some signal space $\Sigma$ of strings and a solution function $\psi : \Sigma \to \Psi$, and the agent responds by choosing a probing strategy $p \in \mathcal{P}$ and a \emph{reporting function} $\tau$ from strings to $\Sigma$. Once these choices have been made, the agent will probe the underlying state $s$ to obtain a signal $\sigma=\sigma_p(s)$, then transform this into a new signal $\tau= \tau(\sigma)$ which he reports to the principal. The principal maps the reported signal to a solution $\psi(\tau)$, which they will accept.
\end{definition}

Notice that this model can be made to capture the design of randomized delegation mechanisms by extending $\Psi$ to the space $\Delta(\Psi)$ of distributions (henceforth \emph{lotteries}) over solutions, and extending both utility functions to lotteries by taking expectations.

We contrast this broad definition of signaling mechanisms with the comparatively simple single-proposal mechanisms.

\begin{definition}
    \label{def:single-proposal-mechanism}
    Fix an instance of the generalized delegation problem. A \emph{single-proposal mechanism} is a special case of signaling mechanism in which the principal chooses some set $\mathcal{R} \subseteq \Psi$ of acceptable outcomes, then sets $\Sigma = \Psi$ and $\psi(R) = R$ if $R \in \mathcal{R}$ and $\psi(R) = \bot$ otherwise.
\end{definition}

Intuitively, in a single proposal mechanism the principal declares a menu of acceptable solutions. The agent then proposes a solution, which is accepted if it is on the menu, and replaced with the null solution otherwise. Now we will show that single-proposal mechanisms are just as powerful as signaling mechanisms. In particular, for every signaling mechanism there is a single-proposal mechanism which selects the same solution and the same probing strategy for each state of nature, at equilibrium. This lemma is a simple extension of \cite[Lemma 1]{KK18} to the our generalized delegation model.

\begin{lemma}
    \label{lem:single-proposal-mechanisms}
    Fix an instance of the generalized delegation problem, as well as the agent's prior distribution $\mu$ on states $S$. For any signaling mechanism $M = (\Sigma, \psi)$ and a corresponding best response strategy $(p, \tau)$ for the agent, there exists a single-proposal mechanism $M' = (\Sigma', \psi')$ and a corresponding best response $(p, \tau')$ such that $(\psi \,\circ\, \tau \,\circ\, \sigma_p) (s) = (\psi' \,\circ\, \tau' \,\circ\, \sigma_p) (s)$ for all states $s \in S$.
\end{lemma}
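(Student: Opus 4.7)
The plan is to build $M'$ from $M$ by taking the range of the signaling mechanism's decoder as the menu of acceptable solutions, and then composing the agent's reporting function with that decoder. Concretely, I would set $\mathcal{R} := \psi(\Sigma)$ and let $M' = (\Sigma', \psi')$ be the single-proposal mechanism of Definition \ref{def:single-proposal-mechanism} corresponding to this $\mathcal{R}$, so that $\Sigma' = \Psi$ and $\psi'(R) = R$ when $R \in \mathcal{R}$ and $\psi'(R) = \bot$ otherwise. The agent's translated strategy in $M'$ keeps the same probing rule $p$ and uses $\tau' := \psi \circ \tau$. Since $\tau'(\sigma) \in \psi(\Sigma) = \mathcal{R}$ for every input $\sigma$, the single-proposal decoder acts as the identity on $\tau'$, so $\psi'(\tau'(\sigma_p(s))) = \tau'(\sigma_p(s)) = \psi(\tau(\sigma_p(s)))$ for all $s \in S$, which is exactly the pointwise outcome equality the lemma asks for.

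The remaining task is to verify that $(p, \tau')$ is a best response for the agent against $M'$. I would argue by contradiction: suppose some alternative $(\tilde p, \tilde\tau)$ in $M'$ gives the agent strictly larger expected utility than $(p, \tau')$. I would then translate it into an alternative $(\tilde p, \tilde\tau_0)$ in the original mechanism $M$, where for every signal $\sigma$ I set $\tilde\tau_0(\sigma)$ to be any element of the (nonempty) preimage $\psi^{-1}(\tilde\tau(\sigma))$ whenever $\tilde\tau(\sigma) \in \mathcal{R}$, and to any fixed element of $\Sigma$ otherwise. On signals with $\tilde\tau(\sigma) \in \mathcal{R}$ the two mechanisms produce the same solution and thus contribute equally to the agent's utility, while on signals with $\tilde\tau(\sigma) \notin \mathcal{R}$ the mechanism $M'$ returns $\bot$ (contributing $0$) and $M$ returns some $\psi(\tilde\tau_0(\sigma)) \in \Psi$, which by the nonnegativity of $y$ built into Definition \ref{def:generalized-delegation-problem} contributes at least $0$. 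Hence the translated strategy in $M$ does at least as well as $(\tilde p, \tilde\tau)$ in $M'$; combined with the pointwise outcome equality for $(p, \tau')$, a strict improvement over $(p, \tau')$ in $M'$ would yield a strict improvement over $(p, \tau)$ in $M$, contradicting the hypothesis that $(p, \tau)$ is a best response to $M$.

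I do not expect a serious obstacle. The one place that requires a bit of care is the handling of signals on which the deviating report falls outside $\mathcal{R}$; this is precisely where nonnegativity of utilities is used to guarantee that the strict improvement transfers back to $M$. Beyond that, the argument is a direct unfolding of the definitions and a straightforward extension of \cite[Lemma 1]{KK18}, with the additional abstraction layer of Definition \ref{def:generalized-delegation-problem} (state space $S$, probing strategies $\mathcal{P}$, general signaling function $\sigma$) contributing only notational overhead rather than any substantive difficulty.
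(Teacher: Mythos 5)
Your construction is exactly the paper's: take $\mathcal{R} = \psi(\Sigma)$, form the single-proposal mechanism from it, set $\tau' = \psi \circ \tau$, note the range of $\tau'$ lies in $\mathcal{R}$ to get the pointwise equality, and transfer any deviation in $M'$ back to $M$ by choosing preimages under $\psi$ (the paper phrases this as a WLOG restriction to reports in $\mathcal{R}$ plus writing $\tau^* = \psi \circ \hat{\tau}$, and argues the resulting inequality directly rather than by contradiction). The argument is correct and essentially identical to the paper's proof.
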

\begin{proof}
    Take any signaling mechanism $M = (\Sigma, \psi)$ with best response $(p, \tau)$ by the agent. Let $\mathcal{R} = \psi(\Sigma)$ be the set of all possible outputs from this mechanism and let $M' = (\Sigma', \psi')$ be the single-proposal mechanism defined by $\mathcal{R}$, i.e. $\Sigma' = \Psi$ and $\psi'$ is such that $\psi'(R) = R$ if $R \in \mathcal{R}$ and $\psi'(R) = \bot$ otherwise. Finally, let $\tau'= \psi \circ \tau$.
    
    Notice that the range of $\tau'$ is contained in $\psi(\Sigma)=\mathcal{R}$, so by definition of $\psi'$ and $\tau'$ it follows that $\psi \circ \tau = \psi' \circ \tau'$. Therefore, it is also the case that $(\psi \circ \tau \circ \sigma_p)(s) = (\psi' \circ \tau' \circ \sigma_p)(s) $ for all $s \in S$. Now we must show that $(p, \tau')$ is a best-response strategy to mechanism $M'$. Consider any valid alternative strategy $(p^*, \tau^*)$. We aim to show that
    \begin{equation}
        \label{eq:single-proposal-ineq}
        \E_s y_{s,p^*}(\psi' \circ \tau^* \circ \sigma_{p^*})(s) \le \E_s y_{s,p}(\psi' \circ \tau' \circ \sigma_p)(s).
    \end{equation}
    
    First, we can assume without loss of generality that $\tau^*$ always outputs a solution in $\mathcal{R}$ because $\psi'$ produces $\bot$ (and a utility of $0$) for all proposals in $\Psi \setminus \mathcal{R}$. Then $\psi' \circ \tau^* = \tau^*$ and, by definition of $\mathcal{R}$, we can write $\tau^*  = \psi \circ \hat{\tau}$ for some function $\hat{\tau}$ from strings to $\Sigma$. Then the left hand side of \eqref{eq:single-proposal-ineq} becomes the expected utility of response $(p^*, \hat{\tau})$ against mechanism $M = (\Sigma, \psi)$:
    \begin{equation*}
        \E y_{s,p^*}(\psi' \circ \tau^* \circ \sigma_{p^*})(s) = \E y_{s,p^*}(\psi \circ \hat{\tau} \circ \sigma_{p^*})(s)
    \end{equation*}
    whereas the right hand side of \eqref{eq:single-proposal-ineq} is the expected utility of response $(p, \tau)$ against $M$:
    \begin{equation*}
        \E y_{s,p}(\psi' \circ \tau' \circ \sigma_p)(s) = \E y_{s,p}(\psi \circ \tau \circ \sigma_p)(s).
    \end{equation*}
    Since $(p, \tau)$ is a best response for this mechanism, the desired inequality \eqref{eq:single-proposal-ineq} follows.
\end{proof}

\section{Deterministic Mechanisms}
\label{deterministic-mechanisms}

In this section, we will consider deterministic single-proposal mechanisms for delegated stochastic probing problems, as defined in Section \ref{formal-definition}. This is in contrast to randomized mechanisms which we will define later in Section \ref{lottery-mechanisms}. We will show that large classes of these problems have constant-factor policies, and therefore constant-factor delegation gaps.

The focus of this section is on Theorem \ref{thm:inner-reduction} and Theorem \ref{thm:multiplicative}, which together give us a general method of constructing competitive delegation policies from certain prophet inequalities and adaptivity gaps. In particular, Corollary \ref{cor:inner-intersection-constraints} gives us constant-factor policies for delegated stochastic probing with no outer constraint and an inner constraint which is the intersection of a constant number of matroid, knapsack, and matching constraints. Similarly, Corollary \ref{cor:inner-intersection-outer-constraints} gives us constant-factor policies for delegated stochastic probing with any downwards-closed outer constraint and an inner constraint which is the intersection of a constant number of matroids.

\subsection{Inner Constraint Delegation}
\label{inner-constraint-delegation}

We will now consider instances of delegated stochastic probing for which there is no outer constraint. We will then combine the results from this section with Theorem \ref{thm:multiplicative} to get solutions to delegation problems with both inner and outer constraints.

To simulate the lack of an outer constraint, we will consider instances of delegation for which the outer constraint is the trivial set system in which all subsets of the elements are feasible. For any ground set $E$ of elements, we will write this trivial set system as $\mathcal{M}^*_E$, omitting the subscript when the set of elements $E$ is clear from context.

\begin{theorem}
    \label{thm:inner-reduction}
    Given an instance $I = (E, \mathcal{M}^*, \mathcal{M}_{\text{in}})$ of delegated stochastic probing without outer constraints, let $J$ be an instance of the prophet inequality problem with random variables $X_e$ for all $e \in E$ and constraint $\mathcal{M}_{\text{in}}$. If there exists an $\alpha$-factor greedy strategy for $J$ against the almighty adversary, then there exists a deterministic $\alpha$-policy for $I$. Furthermore, the proof is constructive when given the strategy for $J$.
\end{theorem}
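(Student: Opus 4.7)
The plan is to lift the greedy prophet-inequality strategy $\mathcal{A}$ directly into a deterministic single-proposal policy $\mathcal{R}$, then show that the agent's best response under $\mathcal{R}$ is exactly the set the greedy gambler would accept under a carefully chosen (almighty) adversarial ordering of the prophet game. Because outcomes in $I$ are triples $(e, x, y) \in \Omega$ while outcomes for $J$ are pairs $(e, x) \in E \times \R_+$, I would define the projection $\pi:\Omega \to E \times \R_+$ by $\pi(e, x, y) = (e, x)$, extend it to sets elementwise, and put
\begin{equation*}
    \mathcal{R} \;=\; \set{T \in \Omega_{\text{in}} : \pi(T) \in \mathcal{A}}.
\end{equation*}
Downwards-closedness of $\mathcal{R}$ follows from downwards-closedness of $\mathcal{A}$ and of $\Omega_{\text{in}}$.

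Second, I would characterize the agent's best response. With no outer constraint and nonnegative agent utilities, the agent can without loss probe every element of $E$, observing the full realization of $(X_e, Y_e)_{e \in E}$; it then selects $T^* \in \mathcal{R}$ maximizing $y(T^*)$. Because $Y_e \ge 0$, any $y$-maximizer can be extended by adjoining outcomes with $y_e = 0$ to an inclusion-maximal element of $\mathcal{A}$ having the same $y$-value, so I would assume the agent's (possibly adversarial) tie-break yields such an inclusion-maximal $T^*$ and deal with the general case at the end.

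Third, I would invoke the prophet inequality. Because $\mathcal{A}$ is a deterministic strategy, the $\alpha$-factor almighty-adversary guarantee states that for any ordering $\sigma$ of the outcomes, the greedy accepted set $G(\sigma)$ satisfies $\E\, x(G(\sigma)) \ge \alpha \cdot \E \max_{H \in \mathcal{I}_{\text{in}}} x(H)$. I would then take $\sigma$, as a function of the joint realization $(X, Y)$, to present the outcomes of $T^*(X, Y)$ first in arbitrary internal order and the remaining outcomes afterwards. Because $\pi(T^*) \in \mathcal{A}$ and $\mathcal{A}$ is downwards-closed, the greedy gambler accepts every outcome of $T^*$ in the first phase; because $T^*$ is inclusion-maximal in $\mathcal{A}$, no subsequent outcome can be added without leaving $\mathcal{A}$, so $G(\sigma) = T^*$. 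The dependence of $\sigma$ on $Y$ is harmless since $Y$ is external randomness independent of $J$: conditioning on $Y$ reduces $\sigma$ to an ordinary almighty-adversary ordering of the $X$-outcomes, the pointwise bound is applied, and we integrate over $Y$. Combining with $\E u(F^*) = \E \max_{H \in \mathcal{I}_{\text{in}}} x(H)$ (the non-delegating principal probes all of $E$) and $\E v_{\mathcal{R}}(F^*_{\mathcal{R}}) = \E x(T^*)$ yields $\E v_{\mathcal{R}}(F^*_{\mathcal{R}}) \ge \alpha \cdot \E u(F^*)$, so $\mathcal{R}$ is an $\alpha$-policy.

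The one technical obstacle is the maximality assumption on $T^*$: under genuinely adversarial tie-breaking, a non-maximal $y$-maximizer could be returned, in which case $G(\sigma)$ would strictly extend $T^*$ by $y_e = 0$ outcomes, and the prophet inequality would only lower-bound $x$ of the larger set rather than $x(T^*)$ itself. I would circumvent this by perturbing agent utilities to $Y_e + \epsilon Z_e$ for i.i.d.\ strictly positive $Z_e$, which uniquifies the $y$-maximizer and forces it to be inclusion-maximal; since the instance is finite, the desired bound for the original utilities follows by sending $\epsilon \to 0^+$ via dominated convergence.
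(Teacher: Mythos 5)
Your overall route is the same as the paper's: lift the greedy family $\mathcal{A}$ to the acceptance set $\mathcal{R}$, observe that the agent (probing everything, since there is no outer constraint) proposes a maximal acceptable set, and argue that the almighty adversary could have forced the greedy gambler into something no better, so the $\alpha$ prophet guarantee transfers. The one step whose justification genuinely fails is your handling of the adversary ordering's dependence on $Y$: you assert that $Y$ is ``external randomness independent of $J$,'' but the model explicitly allows $X_e$ and $Y_e$ to be correlated (only distinct elements are independent). Conditioning on $Y$ therefore changes the law of the $X_e$'s, and the prophet guarantee---an expectation bound under the original law of the $X_e$'s---cannot be applied conditionally. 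The repair is short and is exactly the paper's argument: because the strategy is greedy, under \emph{any} ordering the gambler's accepted set is maximal in $\mathcal{A}$ among realized outcomes, and the almighty adversary can force precisely the minimum-$x$ such maximal set; hence $\E\bigl[\min_{\text{maximal } A} x(A)\bigr] \ge \alpha\, \E\bigl[\max_{T \in \mathcal{I}_{\text{in}}} x(T)\bigr]$, while pointwise the agent's (maximal) proposal has $x$-value at least that minimum. No conditioning on $Y$, and no adversary that ``sees'' $Y$, is needed.

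Your second patch, perturbing the agent's utilities to $Y_e+\epsilon Z_e$ and letting $\epsilon\to 0$, is not a legitimate move: the instance---and hence the actual agent's indifferences and tie-breaking---is fixed, and the limit of the perturbed instances only bounds the principal's utility under the tie-breaking rule \emph{induced by the perturbation}, not under an arbitrary selection among the original agent's best responses. Indeed, for the constructed $\mathcal{R}$ an agent who is indifferent (e.g.\ when some realized outcomes have $y=0$) can, under genuinely adversarial tie-breaking, drop high-$x$ outcomes, so no limiting argument can rescue that case; the paper resolves it by the convention that the agent proposes a maximal acceptable set among its best responses (harmless, since adjoining $y=0$ outcomes never hurts the agent), which is what your argument should invoke directly.
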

\begin{proof}
    First, we have by our choice of $J$ that the expected utility of the prophet in $J$ is equal to the expected utility of the non-delegating principal in $I$. Notice that the principal has no outer constraint, so we can assume without loss of generality that they probe all elements. Then the prophet and non-delegating principal both get exactly
    \begin{equation*}
        \E \max_{T \in \mathcal{M}_{\text{in}}} x(T).
    \end{equation*}
    
    Now consider the gambler's $\alpha$-factor greedy strategy, which consists of some collection $\mathcal{A} \subseteq 2^{E \times \mathbb{R}_+}$ of ``acceptable'' sets of outcomes. We will define the delegating principal's policy as follows
    \begin{equation*}
        \mathcal{R} = \set{\set{(e, x, y) : (e, x) \in A, y \in \mathbb{R}_+} : A \in \mathcal{A}}.
    \end{equation*}
    Notice that policy $\mathcal{R}$ is exactly the same as strategy $\mathcal{A}$, just translated into the language of delegation.
    
    Now we will show that the utility of the delegating principal with policy $\mathcal{R}$ is at least the utility of the gambler with greedy strategy $\mathcal{A}$. In the prophet inequality, the almighty adversary can order the random variables such that the gambler always gets their least favorite among all maximal acceptable sets (the set is always maximal because the gambler's strategy is greedy). Compare this with delegation, where the agent knows the result of all probed elements as well as the principal's acceptable sets $\mathcal{R}$. Since the agent has non-negative utility for all outcomes, we can assume without loss of generality that they will always propose a maximal acceptable set. For every corresponding set of realizations in each problem, the gambler will receive the maximal set in $\mathcal{A}$ of minimum value and the principal will receive some maximal set in $\mathcal{R}$. Since we defined $\mathcal{R}$ to correspond directly with $\mathcal{A}$, the principal's value must be at least as large as the gambler's. This is true of all possible realizations, so $\mathcal{R}$ must be an $\alpha$-policy for $I$.
\end{proof}

We note that by construction of the principal's policy $\mathcal{R}$, this theorem holds even when the principal is unaware of the agent's utility values $y$. This is comparable to the reduction in \cite{KK18} which similarly worked regardless of the principal's knowledge of the agent's utilities.

Unfortunately, applications of this theorem rely on the existence of competitive strategies against the almighty adversary, which is a very strong condition. It is natural to ask whether it's really necessary in the reduction for the adversary to be almighty. We provide some evidence that this is indeed necessary by considering the special case of a 1-uniform inner matroid. In this case, it's easy to construct instances for which the utility of the principal and agent sum to a constant value for all outcomes, i.e. $X_e + Y_e = c$ for all $e$ and some constant $c$. In such an instance, the agent's goals are directly opposed to the principal's, so the agent will always propose the principal's least favorite acceptable outcome. In the corresponding instance of the prophet inequality, the almighty adversary can guarantee that the gambler chooses their least favorite acceptable outcome, while weaker adversaries (that don't know the realizations of variables) cannot enforce the same guarantee.

Using some known greedy prophet inequalities against the almighty adversary, we get the following corollaries.

\begin{corollary}
    \label{cor:inner-uniform-constraint}
    There exist deterministic $\frac{1}{2}$-policies for delegated stochastic probing problems with no outer constraint and a 1-uniform inner constraint.
\end{corollary}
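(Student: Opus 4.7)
The plan is to invoke Theorem \ref{thm:inner-reduction} directly, reducing the claim to the existence of a $\frac{1}{2}$-factor greedy strategy against the almighty adversary for the single-choice prophet inequality. Since the inner constraint $\mathcal{M}_{\text{in}}$ is $1$-uniform, an instance of the delegated stochastic probing problem with no outer constraint corresponds under the reduction to the classical (single-choice) prophet inequality on the variables $\{X_e\}_{e\in E}$.

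First I would recall Samuel-Cahn's threshold rule \cite{SC84}: pick a threshold $t$ (for example, the median of the distribution of $\max_e X_e$, or the $t$ satisfying $\sum_e \P(X_e \ge t) = \frac{1}{2}$), and accept the first outcome $(e, x_e)$ with $x_e \ge t$. This is manifestly a greedy strategy in the sense of the definition in Section \ref{prophet-inequalities}: the acceptable family is
\begin{equation*}
    \mathcal{A} = \set{\emptyset} \cup \set{\set{(e, x_e)} : e \in E, \; x_e \ge t},
\end{equation*}
which is downwards-closed in $2^{E \times \mathbb{R}_+}$, and a gambler following $\mathcal{A}$ accepts an outcome exactly when doing so keeps the accepted set in $\mathcal{A}$.

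Next I would appeal to the fact, noted in Section \ref{prophet-inequalities} and proved in \cite{roughgarden_twenty_lectures}, that the $\frac{1}{2}$ competitive factor of the threshold rule holds even against the almighty adversary. The argument there doesn't exploit any temporal structure beyond the threshold test itself: regardless of the order in which the adversary reveals the outcomes (and even with knowledge of all realizations), the expected gain of the greedy rule exceeds half the expected value of the prophet, because the analysis turns only on $\P(\max_e X_e \ge t)$ and $\E[(X_e - t)^+]$.

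Combining these two facts gives a $\frac{1}{2}$-factor greedy strategy against the almighty adversary for the prophet inequality instance, and Theorem \ref{thm:inner-reduction} then produces, constructively, a deterministic $\frac{1}{2}$-policy for the corresponding delegated stochastic probing instance. I don't anticipate any real obstacle: the only thing worth being careful about is verifying that Samuel-Cahn's rule fits the formal definition of a greedy strategy (the family $\mathcal{A}$ above is downwards-closed and the decision rule coincides with ``accept iff the accepted set remains in $\mathcal{A}$''), after which the corollary is immediate from Theorem \ref{thm:inner-reduction}.
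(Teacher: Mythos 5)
Your proposal is correct and follows the paper's own route: invoke Theorem \ref{thm:inner-reduction} with Samuel-Cahn's threshold rule, which the paper already notes is a greedy strategy whose $\frac{1}{2}$ factor holds against the almighty adversary. The extra verification that the threshold rule fits the formal definition of a greedy strategy is a welcome detail but does not change the argument.
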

\begin{proof}
    This follows from the existence of $\frac{1}{2}$ threshold rules (such as Samuel-Cahn's median rule \cite{SC84}) for the 1-uniform prophet inequality against the almighty adversary.
\end{proof}

\begin{corollary}
    \label{cor:inner-constraints}
    There exist constant-factor deterministic policies for delegated stochastic probing problems with no outer constraint and three classes of inner constraints. These factors are: $\frac{1}{4}$ for matroid constraints, $\frac{1}{2e} \approx 0.1839$ for matching constraints, and $\frac{3}{2} - \sqrt{2} \approx 0.0857$ for knapsack constraints.
\end{corollary}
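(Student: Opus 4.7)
The plan is to invoke Theorem \ref{thm:inner-reduction} three times, once for each class of inner constraint, so the proof is essentially a routing exercise: for each constraint class, produce a greedy prophet inequality against the almighty adversary with the claimed factor, and then the theorem delivers a deterministic policy of the same factor. The real content lies in the prophet inequalities themselves, which are not ours to prove; they come directly from Feldman, Svensson, and Zenklusen~\cite{OCRS}. Specifically, I would cite their greedy (online contention-resolution-based) prophet inequalities against the almighty adversary for matroid, matching, and knapsack constraints, which give factors of $\tfrac{1}{4}$, $\tfrac{1}{2e}$, and $\tfrac{3}{2} - \sqrt{2}$ respectively. Each of these strategies is, by construction, greedy in the sense of our Section~\ref{prophet-inequalities} definition, since they take the form ``accept the current outcome if and only if doing so keeps the accepted set inside some pre-committed downwards-closed family $\mathcal{A} \subseteq 2^{E \times \mathbb{R}_+}$'' (with $\mathcal{A}$ encoding the selectability rule of the OCRS composed with the matroid/matching/knapsack).

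With each such $\mathcal{A}$ in hand, Theorem \ref{thm:inner-reduction} mechanically yields the corresponding deterministic policy $\mathcal{R}$: lift each acceptable outcome-set $A \in \mathcal{A}$ to the set of labeled outcomes $\{(e,x,y) : (e,x) \in A\}$, ignoring the agent's utility coordinate. The theorem guarantees that the expected principal utility under this policy is at least an $\alpha$-fraction of the non-delegated optimum, for the same $\alpha$ as the prophet guarantee. Thus the three bullets of the corollary follow by three direct applications.

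I do not expect a serious obstacle here, since the reduction of Theorem \ref{thm:inner-reduction} does all the work; the only thing to be careful about is verifying that the prophet inequalities we cite are indeed (i) against the almighty adversary (not just the offline or online adversary), and (ii) greedy in our sense, rather than, say, relying on rejection probabilities or randomized selection that would violate the pure ``accept iff feasible in $\mathcal{A}$'' form. Both conditions are satisfied by the greedy OCRS-based constructions of \cite{OCRS}, since greedy OCRSs operate against the almighty adversary by design and their composition with the original prophet thresholds preserves the downward-closed acceptance structure. Once these two facts are stated and cited precisely, the proof is a one-line invocation of Theorem \ref{thm:inner-reduction} for each of the three constraint classes.
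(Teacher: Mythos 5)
Your overall route is the same as the paper's: cite the greedy OCRS-based prophet inequalities of \cite{OCRS} against the almighty adversary and feed them into Theorem \ref{thm:inner-reduction}. But there is one genuine gap, and it is exactly at the point you flagged and then waved away: you assert that the constructions of \cite{OCRS} are greedy in the pure deterministic sense, ``accept iff the accepted set stays in a fixed downwards-closed family $\mathcal{A}$,'' with no randomized selection involved. That is not what \cite{OCRS} provides. Combining their Theorem 1.8 with Observation 1.6 and optimizing the parameter $b$ yields \emph{randomized} greedy OCRSs for matroids, matchings, and knapsacks with factors $\frac{1}{4}$, $\frac{1}{2e}$, and $\frac{3}{2}-\sqrt{2}$ (the constants are not quoted directly; they come from this optimization), and via their Theorem 1.12 these give \emph{randomized} greedy prophet strategies against the almighty adversary, i.e., distributions over downwards-closed families rather than a single family. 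As stated, Theorem \ref{thm:inner-reduction} takes a single family $\mathcal{A}$ (the Definition of a greedy strategy is deterministic), and the corollary promises deterministic policies, so your argument as written does not go through.

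The missing step is short but necessary: since the adversary is almighty, it already knows the gambler's coin flips, so the randomized strategy's guarantee is an average over realizations of its internal randomness of the guarantees of deterministic greedy strategies against that same adversary; hence some fixed downwards-closed family achieves at least the same factor, and that deterministic greedy strategy is what gets plugged into Theorem \ref{thm:inner-reduction}. (This derandomization is legitimate precisely because the random choice of family in a greedy OCRS depends only on the distributions, not on the realizations.) With that sentence added, and with the constants attributed to the parameter optimization in \cite{OCRS} rather than quoted verbatim, your proof matches the paper's.
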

\begin{proof}
    This corollary is largely based on results from \cite{OCRS}. By combining \cite[Theorem 1.8]{OCRS} with \cite[Observation 1.6]{OCRS} and optimizing the parameters, we get randomized greedy online contention resolution schemes (OCRS) for three aforementioned  constraint systems with the same factors listed above. Then, applying \cite[Theorem 1.12]{OCRS}, each randomized greedy OCRS corresponds to a randomized greedy prophet inequality against the almighty adversary with the same approximation factor. Since the adversary is almighty, they can predict any randomness in our strategy. Therefore, the randomized strategy is no better than the best deterministic strategy, and there must exist some deterministic strategy achieving the same factor. Finally, we apply our Theorem \ref{thm:inner-reduction} to turn the prophet inequality strategy into a delegation policy with the same factor.
\end{proof}

\begin{corollary}
    \label{cor:inner-intersection-constraints}
    There exist constant-factor deterministic policies for delegated stochastic probing problems with no outer constraint and an inner constraint that is the intersection of a constant number of matroid, knapsack, and matching constraints.
\end{corollary}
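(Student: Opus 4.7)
The plan is to reduce to Theorem \ref{thm:inner-reduction} by constructing a constant-factor greedy prophet inequality against the almighty adversary for the combined inner constraint, which by hypothesis is the intersection of a constant number $k$ of matroid, knapsack, and matching constraints $\mathcal{M}_1, \ldots, \mathcal{M}_k$. Once such a greedy strategy is in hand, Theorem \ref{thm:inner-reduction} immediately yields a deterministic delegation policy with the same constant factor.

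For each constituent constraint $\mathcal{M}_i$, the proof of Corollary \ref{cor:inner-constraints} (together with the underlying results of \cite{OCRS}) supplies a deterministic greedy prophet inequality against the almighty adversary with some constant factor $\alpha_i > 0$, described by a downwards-closed family $\mathcal{A}_i \subseteq 2^{E \times \mathbb{R}_+}$ of acceptable outcome sets. I would define the composed greedy strategy by $\mathcal{A} = \bigcap_{i=1}^k \mathcal{A}_i$. This family is downwards-closed as an intersection of downwards-closed families, and every $A \in \mathcal{A}$ is feasible in each $\mathcal{M}_i$ and therefore feasible in $\bigcap_i \mathcal{M}_i$, so $\mathcal{A}$ is a valid greedy strategy for the combined inner constraint.

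The heart of the argument is showing that this composed greedy strategy competes with the prophet within factor $\prod_{i=1}^k \alpha_i$, which is a positive constant since $k$ is constant. I would route this through the equivalence between greedy OCRS and greedy prophet inequalities against the almighty adversary given by \cite[Theorem 1.12]{OCRS}, already exploited in Corollary \ref{cor:inner-constraints}. Specifically, greedy OCRS compose multiplicatively under intersection of constraints: if each $\mathcal{M}_i$ admits a $b_i$-selectable greedy OCRS with acceptance predicate $\mathcal{A}_i$, then $\bigcap_i \mathcal{M}_i$ admits a $\prod_i b_i$-selectable greedy OCRS with acceptance predicate $\bigcap_i \mathcal{A}_i$ (the event that an element is accepted under the composed scheme is the conjunction of independent events under each constituent scheme, by the independence of the underlying random variables $X_e$). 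Translating back through the equivalence yields the desired composed greedy prophet inequality with factor $\prod_i \alpha_i$ against the almighty adversary.

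The main obstacle is the multiplicative composition claim: I must ensure that the factor transfers cleanly from the OCRS setting to the almighty-adversary prophet inequality setting, where the adversary gets to see the realizations. Because the composed acceptance rule $\mathcal{A}$ is deterministic and depends only on the identity-value history, the almighty adversary's power is already captured by the individual constituent inequalities, and the conjunction of acceptance events inherits the product lower bound from the independence of elements. With this composed greedy prophet inequality established, Theorem \ref{thm:inner-reduction} produces the claimed deterministic constant-factor policy.
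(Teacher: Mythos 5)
Your overall pipeline is the same as the paper's (greedy OCRS $\to$ greedy prophet inequality against the almighty adversary $\to$ derandomize $\to$ Theorem \ref{thm:inner-reduction}), but the composition step you hand-roll is where the gap lies, and it is exactly the step the paper outsources: the paper's proof simply invokes \cite[Corollary 1.13]{OCRS}, which already gives a constant-factor greedy OCRS for intersections of constantly many matroid, knapsack, and matching constraints, and then repeats the reasoning of Corollary \ref{cor:inner-constraints}.

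Two concrete problems with your composition argument. First, the claimed independence is false: whether element $e$ remains addable under $\mathcal{A}_i$ and whether it remains addable under $\mathcal{A}_j$ are both functions of the \emph{same} realized set of active elements (and the same adversarial order), so the acceptance events under the constituent schemes are not independent. What saves the product bound in \cite{OCRS} is that these are all decreasing events in the independent element activations, so an FKG/Harris-type correlation inequality gives $\Pr[\text{all hold}] \ge \prod_i \Pr[\text{$i$-th holds}]$; independence is neither true nor needed, but some such argument must be made. Second, the quantities you multiply are the wrong ones. The selectability guarantees in \cite{OCRS} are two-parameter, $(b,c_i)$-selectable, defined relative to an ex-ante vector $x \in b\,\mathcal{P}_i$; composition requires running all constituent schemes with a \emph{common} scaling $b$ and a common ex-ante point $x \in b\,(\mathcal{P}_1 \cap \dots \cap \mathcal{P}_k)$, yielding a $(b, \prod_i c_i(b))$-selectable scheme and a final factor $\max_b\, b \prod_i c_i(b)$. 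Your $\alpha_i$'s are the already-optimized end products $b_i c_i(b_i)$ of prophet inequalities against \emph{different} benchmarks (the prophet for $\mathcal{M}_i$, not for the intersection), and guarantees at that level do not multiply: each scheme's thresholds and scaling are tied to its own $b_i$, so intersecting the final acceptance families $\mathcal{A}_i$ does not come with the per-element probability guarantees you would need. Fixing this means redoing the composition at the $(b,c)$-selectability level before converting to a prophet inequality --- which is precisely the content of \cite[Corollary 1.13]{OCRS} that the paper cites.
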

\begin{proof}
    We use \cite[Corollary 1.13]{OCRS} along with the same reasoning as Corollary \ref{cor:inner-constraints}.
\end{proof}

We note that it is open whether there exists a $\frac{1}{2}$-OCRS for matroids against the almighty adversary \cite{OCRS_prophet}. The existence of such an OCRS, if greedy, would imply the existence of $\frac{1}{2}$-policy for delegated stochastic probing with a matroid inner constraint and no outer constraint.

Although Corollary \ref{cor:inner-uniform-constraint} applies to a model very similar to the distributional delegation model from \cite{KK18}, our principal has the additional power of being able to distinguish between otherwise identical elements by their name alone. However, by observing that Theorem~\ref{thm:inner-reduction} turns greedy prophet inequalities that don't distinguish between identical elements into delegation policies that also don't distinguish between identical elements, we can derive delegation policies that recover the $\frac{1}{2}$-factor guarantee from \cite{KK18} for their distributional model. We leave the details for Section \ref{symmetric-delegation-policies}.

\subsection{Outer Constraint Delegation}
\label{outer-constraint-delegation}

Using the adaptivity gap from Section \ref{adaptivity-gap}, we will now show that there are large classes of delegated stochastic probing problems with nontrivial outer constraints for which the principal can achieve, in expectation, a constant-factor of their non-delegated optimal utility.

\begin{theorem}
    \label{thm:multiplicative}
    Let $I = (E, \mathcal{M}_{\text{out}}, \mathcal{M}_{\text{in}})$ be an instance of delegated stochastic probing. Suppose that, for all $F \in \mathcal{I}_{\text{out}}$, there exists a deterministic $\alpha$-policy for the restriction $I_F = (F, \mathcal{M}^*_F, \mathcal{M}_{\text{in}}|F)$ of instance $I$ to $F$. Suppose also that the adaptivity gap for weighted rank functions of $\mathcal{M}_{\text{in}}$ on set system $\mathcal{M}_{\text{out}}$ is at least $\beta$. Then there exists a deterministic $\alpha \beta$-policy for instance $I$.
\end{theorem}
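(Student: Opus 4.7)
The natural approach is to fix, once and for all, a \emph{non-adaptive} probing set that approximately matches what the optimal adaptive non-delegating principal would achieve, and then deploy the restricted-instance policy on that fixed set. First, by hypothesis the adaptivity gap for weighted rank functions of $\mathcal{M}_{\text{in}}$ on $\mathcal{M}_{\text{out}}$ is at least $\beta$. Observe that the function $u(F) = \max_{G \subseteq F,\, G \in \mathcal{I}_{\text{in}}} x(G)$ is exactly the weighted rank function of $\mathcal{M}_{\text{in}}$ with weights $X_e$. Therefore there is a (possibly randomized) non-adaptive algorithm achieving expected value at least $\beta \,\E u(F^*)$; by conditioning on its internal randomness and picking the best realization, we extract a single deterministic $F \in \mathcal{I}_{\text{out}}$ with
\begin{equation*}
    \E u(F) \;\ge\; \beta \,\E u(F^*).
\end{equation*}

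Next, apply the hypothesis to the restricted instance $I_F = (F, \mathcal{M}^*_F, \mathcal{M}_{\text{in}}|F)$ to obtain a deterministic $\alpha$-policy $\mathcal{R}_F$. By definition, when the agent best-responds in $I_F$, the principal's expected utility is at least $\alpha \,\E u(F)$. Now lift $\mathcal{R}_F$ to a policy $\mathcal{R}$ in the original instance $I$ by taking $\mathcal{R} := \mathcal{R}_F$, viewed as a family of acceptable subsets of $\Omega$ whose elements all lie in $F$.

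It remains to argue that the agent's best response in $I$ under $\mathcal{R}$ delivers the principal at least as much utility as in $I_F$. Since $F \in \mathcal{I}_{\text{out}}$, probing exactly $F$ is a feasible strategy for the agent in $I$. Moreover, every set in $\mathcal{R}$ consists only of outcomes from elements in $F$, so probing any element outside $F$ can never enlarge the collection of acceptable proposals available to the agent; conversely, probing all of $F$ is weakly dominant over probing any subset of $F$, because extra information can always be ignored. Hence the agent's optimal behavior in $I$ is (weakly) to probe $F$ and then respond exactly as they would in $I_F$, yielding the same distribution over accepted sets. Chaining the two inequalities gives
\begin{equation*}
    \E v_{\mathcal{R}}(F^*_{\mathcal{R}}) \;\ge\; \alpha \,\E u(F) \;\ge\; \alpha\beta \,\E u(F^*),
\end{equation*}
so $\mathcal{R}$ is a deterministic $\alpha\beta$-policy for $I$.

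\textbf{Main obstacle.} The only delicate point is the agent-incentive argument in the third step: one must be careful that the worst-case (even adversarial) tie-breaking allowed in the definition of $v_{\mathcal{R}}$ does not let the agent ``escape'' the restriction, and that probing a proper subset or superset of $F$ does not afford the agent a strictly better response that the principal cannot control. This is resolved cleanly by the two observations above (probing outside $F$ cannot create new acceptable sets; probing all of $F$ weakly dominates probing a subset for the agent), which together imply that any agent best response in $I$ can be replicated by an agent best response in $I_F$ with the same principal payoff, allowing the $\alpha$-policy guarantee for $I_F$ to carry over without loss.
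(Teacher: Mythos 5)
Your proposal is correct and follows essentially the same route as the paper's proof: use the adaptivity gap to fix a single non-adaptive probing set $F \in \mathcal{I}_{\text{out}}$ with $\E u(F) \ge \beta\, \E u(F^*)$, apply the assumed $\alpha$-policy to the restriction $I_F$, and observe that since every acceptable set contains only outcomes from $F$ (and $F$ is outer-feasible), the agent's best response in $I$ reduces to probing $F$ and behaving as in $I_F$. Your additional remarks on derandomizing the non-adaptive algorithm and on tie-breaking are fine elaborations of steps the paper treats implicitly.
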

\begin{proof}
    Given any set of elements $F \subseteq E$, we can write the utility of the non-delegating principal who probes $F$ as
    \begin{equation*}
        u(F) = \max_{G \subseteq F, G \in \mathcal{I}_{\text{in}}} x(G)
    \end{equation*}
    and the utility of the delegating principal with policy $\mathcal{R}$ who probes $F$ as
    \begin{equation*}
        v_{\mathcal{R}}(F) = x\paren{\argmax_{G \subseteq F, \Omega_G \in \mathcal{R}} y(G)},
    \end{equation*}
    where $\Omega_G \subseteq \Omega$ is the set of outcomes from the probed elements $G$.
    
    Notice that for any fixed set of realizations from all random variables, $u$ is just the weighted rank function of set system $\mathcal{M}_{\text{in}}$. Therefore, by the adaptivity gap for such a function over set system $\mathcal{M}_{\text{out}}$, there exists a fixed set $F \in \mathcal{I}_{\text{out}}$ such that
    \begin{equation}
        \label{eq:multiplicative-thm-1}
        \E u(F) \ge \beta \E u(E^*),
    \end{equation}
    where $E^* \in \mathcal{I}_{\text{out}}$ is a random variable representing the optimal set of elements selected by an adaptive non-delegating principal. Notice that expectation is also over the randomness of~$E^*$.
    
    Now we will consider the same delegation instance with access to only the elements in $F$, i.e. instance $(F, \mathcal{M}_{\text{out}}| F, \mathcal{M}_{\text{in}}| F)$. Since $F \in \mathcal{I}_{\text{out}}$, the outer matroid doesn't restrict probing at all and this instance is equivalent to $I_F = (F, \mathcal{M}_F^*, \mathcal{M}_{\text{in}}|F)$. By our assumption, this problem has some $\alpha$-approximate delegation policy. Let $\mathcal{R}$ be one such policy. Then we have
    \begin{equation}
        \label{eq:multiplicative-thm-2}
        \E v_{\mathcal{R}}(F) \ge \alpha \E u(F).
    \end{equation}
    
    Since $\mathcal{R}$ contains outcomes only from elements in $F$, an agent restricted to $\mathcal{R}$ in the original instance $I$ has no incentive to probe elements outside of $F$. Because $F \in \mathcal{I}_{\text{out}}$, the agent can probe all of $F$. Therefore, we can assume without loss of generality that an optimal adaptive strategy will choose to probe exactly the elements in $F$. Then
    \begin{equation}
        \label{eq:multiplicative-thm-3}
        \E v_{\mathcal{R}}(E^*_{\mathcal{R}}) = \E v_{\mathcal{R}}(F),
    \end{equation}
    where $E^*_{\mathcal{R}} \subseteq E$ is a random variable containing exactly the elements probed by an optimal adaptive agent when when restricted to acceptable set $\mathcal{R}$ in the original instance $I$.
    
    Combining \eqref{eq:multiplicative-thm-1}, \eqref{eq:multiplicative-thm-2}, and \eqref{eq:multiplicative-thm-3}, we get the desired inequality:
    \begin{align*}
        \E v_{\mathcal{R}}(E^*_{\mathcal{R}})
        &= \E v_{\mathcal{R}}(F) \\
        &\ge \alpha \E u(F) \\
        &\ge \alpha \beta \E u(E^*).
    \end{align*}
\end{proof}

\begin{corollary}
    \label{cor:inner-uniform-outer-matroid-constraint}
    There exist deterministic $\frac{1}{2} \paren{1 - \frac{1}{e}} \approx 0.3160$-policies for delegated stochastic probing problems with matroid outer constraints and a 1-uniform inner constraint.
\end{corollary}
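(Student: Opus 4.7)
The plan is to apply Theorem \ref{thm:multiplicative} by plugging in two ingredients: the inner-constraint policy from Corollary \ref{cor:inner-uniform-constraint} and the adaptivity gap of Asadpour and Nazerzadeh mentioned in Section \ref{adaptivity-gap}. Concretely, I would instantiate the theorem with $\alpha = \tfrac{1}{2}$ and $\beta = 1 - \tfrac{1}{e}$ and let the multiplication take care of itself.

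First, for any $F \in \mathcal{I}_{\text{out}}$, the restricted instance $I_F = (F, \mathcal{M}_F^*, \mathcal{M}_{\text{in}}|F)$ has no outer constraint and a $1$-uniform inner constraint, so Corollary \ref{cor:inner-uniform-constraint} produces a deterministic $\tfrac{1}{2}$-policy for $I_F$. This supplies the $\alpha = \tfrac{1}{2}$ hypothesis uniformly over all $F \in \mathcal{I}_{\text{out}}$.

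Second, I need to verify that the adaptivity-gap hypothesis of Theorem \ref{thm:multiplicative} applies with $\beta = 1 - \tfrac{1}{e}$. The weighted rank function of a $1$-uniform matroid is $u(F) = \max_{e \in F} X_e$, which is monotone and submodular for any fixed realization of $\{X_e\}_{e \in E}$. The outer constraint $\mathcal{M}_{\text{out}}$ is a matroid, so the Asadpour-Nazerzadeh adaptivity gap result cited in Section \ref{adaptivity-gap} gives that the best non-adaptive probing strategy achieves at least a $(1-\tfrac{1}{e})$ fraction of the optimal adaptive expected value. This is exactly the $\beta$-bound needed by Theorem \ref{thm:multiplicative}.

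Plugging both into Theorem \ref{thm:multiplicative} yields a deterministic $\alpha\beta = \tfrac{1}{2}(1-\tfrac{1}{e})$-policy for the full instance $I = (E, \mathcal{M}_{\text{out}}, \mathcal{M}_{\text{in}})$. The only step requiring a brief justification is confirming that $u$ is monotone submodular as a function of the set $F$ (which is immediate since a pointwise maximum of nonnegative values is a submodular, monotone set function), so that the matroid adaptivity gap theorem applies; I anticipate no real obstacle, as every other piece is assembled by direct citation.
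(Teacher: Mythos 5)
Your proposal is correct and follows essentially the same route as the paper's proof: restrict to independent sets of the outer matroid to invoke Corollary \ref{cor:inner-uniform-constraint} with $\alpha = \tfrac{1}{2}$, use the Asadpour--Nazerzadeh $1-\tfrac{1}{e}$ adaptivity gap for monotone submodular (weighted rank) functions over matroid constraints as $\beta$, and combine via Theorem \ref{thm:multiplicative}.
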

\begin{proof}

  By Corollary \ref{cor:inner-uniform-constraint}, there is a $\frac{1}{2}$-policy for any instance of delegated stochastic probing with a 1-uniform inner constraint and no outer constraint. Every restriction of our present instance $I$ to some independent set $F$ of the outer matroid is of this form.

  From \cite{adaptivity}, we have a $1 - \frac{1}{e}$ adaptivity gap for stochastic submodular on matroid constraints. Since the weighted rank function of any matroid is submodular, the adaptivity gap of weighted rank functions of the inner 1-uniform matroid constraint on the outer matroid constraint is also $1- \frac{1}{e}$.

  Therefore, the conditions of Theorem \ref{thm:multiplicative} hold with $\alpha=\frac{1}{2}$ and $\beta=1-\frac{1}{e}$, and we get the desired factor.
\end{proof}

\begin{corollary}
    \label{cor:inner-matroid-outer-matroid-constraint}
    There exist deterministic $\frac{1}{4} \paren{1 - \frac{1}{e}} \approx 0.1580$-policies for delegated stochastic probing problems with matroid outer and inner constraints.
\end{corollary}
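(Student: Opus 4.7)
The plan is to follow the template set by Corollary \ref{cor:inner-uniform-outer-matroid-constraint}, combining an inner-constraint delegation result with an adaptivity-gap bound via Theorem \ref{thm:multiplicative}. Specifically, I will instantiate Theorem \ref{thm:multiplicative} with $\alpha = \frac{1}{4}$ coming from the matroid case of Corollary \ref{cor:inner-constraints}, and $\beta = 1 - \frac{1}{e}$ coming from the adaptivity gap for stochastic submodular maximization subject to matroid constraints established in \cite{adaptivity}.

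First I would verify the $\alpha$ side: for every independent set $F$ of the outer matroid $\mathcal{M}_{\text{out}}$, the restriction $I_F = (F, \mathcal{M}^*_F, \mathcal{M}_{\text{in}}|F)$ is an instance of delegated stochastic probing with no outer constraint whose inner constraint $\mathcal{M}_{\text{in}}|F$ is a matroid (matroids are closed under restriction to a subset of the ground set). Hence Corollary \ref{cor:inner-constraints} applies and yields a deterministic $\frac{1}{4}$-policy for $I_F$. Thus the first hypothesis of Theorem \ref{thm:multiplicative} is satisfied with $\alpha = \frac{1}{4}$.

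Next I would verify the $\beta$ side: the weighted rank function of a matroid is a nonnegative, monotone, submodular set function (a standard fact about matroids; for any fixed realization of the utilities $\{X_e\}_{e \in E}$, the function $F \mapsto u(F) = \max_{G \subseteq F, G \in \mathcal{I}_{\text{in}}} x(G)$ is precisely the $x$-weighted rank function of $\mathcal{M}_{\text{in}}$, hence monotone submodular). The adaptivity gap result of Asadpour and Nazerzadeh \cite{adaptivity} shows that the adaptivity gap for stochastic monotone submodular maximization subject to a matroid constraint is at least $1 - \frac{1}{e}$, so taking $\mathcal{M}_{\text{out}}$ as the outer matroid gives $\beta = 1 - \frac{1}{e}$.

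With both hypotheses in place, Theorem \ref{thm:multiplicative} delivers a deterministic $\alpha\beta = \frac{1}{4}\paren{1 - \frac{1}{e}} \approx 0.1580$-policy for $I$, which is the desired conclusion. I do not anticipate any real obstacle here: both ingredients are already assembled earlier in the paper and in cited work, and the proof is essentially a one-line composition. The only subtle point worth making explicit is that the adaptivity gap in \cite{adaptivity} is stated for submodular objectives, and one must observe that weighted matroid rank functions form a special case of this class to justify invoking the bound.
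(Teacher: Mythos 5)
Your proposal is correct and is exactly the paper's argument: the paper also obtains this corollary by feeding the $\frac{1}{4}$ matroid bound of Corollary \ref{cor:inner-constraints} (applied to each restriction $\mathcal{M}_{\text{in}}|F$, which remains a matroid) and the $1-\frac{1}{e}$ submodular-over-matroid adaptivity gap of \cite{adaptivity} (valid since weighted matroid rank functions are monotone submodular) into Theorem \ref{thm:multiplicative}, mirroring Corollary \ref{cor:inner-uniform-outer-matroid-constraint}. Your write-up just makes the two hypotheses explicit where the paper compresses them into one line.
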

\begin{proof}
    Similar to Corollary \ref{cor:inner-uniform-outer-matroid-constraint}, we use the $1 - \frac{1}{e}$ adaptivity gap for submodular functions over matroid constraints along with Corollary \ref{cor:inner-constraints}.
\end{proof}

\begin{corollary}
    \label{cor:inner-intersection-outer-constraints}
    There exist constant-factor deterministic policies for delegated stochastic probing with any downward-closed outer constraint and an inner constraint which is the intersection of a constant number of matroids.
\end{corollary}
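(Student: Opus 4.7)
The plan is to apply Theorem~\ref{thm:multiplicative} directly, with the two required ingredients supplied by Corollary~\ref{cor:inner-intersection-constraints} and the adaptivity gap result of Bradac et al.~\cite{multivalue_adaptivity} cited at the end of Section~\ref{adaptivity-gap}. Let $k$ be the (constant) number of matroids whose intersection defines $\mathcal{M}_{\text{in}}$, and let $\mathcal{M}_{\text{out}}$ be an arbitrary downwards-closed outer constraint.

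First I would verify hypothesis~$(\alpha)$ of Theorem~\ref{thm:multiplicative}. For any $F \in \mathcal{I}_{\text{out}}$, the restriction $\mathcal{M}_{\text{in}}|F$ is the intersection of the $k$ restricted matroids $\mathcal{M}_i|F$, and hence is itself an intersection of a constant number of matroids. So the restricted instance $I_F = (F, \mathcal{M}_F^*, \mathcal{M}_{\text{in}}|F)$ is an instance of delegated stochastic probing with no outer constraint and inner constraint of the form required by Corollary~\ref{cor:inner-intersection-constraints}, yielding a deterministic $\alpha$-policy for some constant $\alpha > 0$ that depends only on $k$. Note that $\alpha$ can be chosen uniformly over all $F$, since the guarantee in Corollary~\ref{cor:inner-intersection-constraints} depends only on the number of matroids, not on the ground set.

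Next I would verify hypothesis~$(\beta)$. We need a constant lower bound on the adaptivity gap for weighted rank functions of $\mathcal{M}_{\text{in}}$ over the outer set system $\mathcal{M}_{\text{out}}$. Since $\mathcal{M}_{\text{in}}$ is the intersection of a constant number of matroids, its weighted rank function is exactly the kind of objective handled by \cite{multivalue_adaptivity}, and the downwards-closed outer constraint is in particular prefix-closed, so their theorem gives a constant $\beta > 0$ as required.

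With both hypotheses in place, Theorem~\ref{thm:multiplicative} yields a deterministic $\alpha\beta$-policy for $I$, and $\alpha\beta$ is a constant depending only on $k$, which is exactly the statement of the corollary. I do not anticipate any real obstacle: the only non-routine checks are that restriction preserves the ``intersection of $k$ matroids'' structure (immediate from $(\mathcal{M}_1 \cap \cdots \cap \mathcal{M}_k)|F = (\mathcal{M}_1|F) \cap \cdots \cap (\mathcal{M}_k|F)$) and that the adaptivity-gap result of \cite{multivalue_adaptivity} is stated in sufficient generality to cover arbitrary downwards-closed outer constraints, both of which are already spelled out in the preliminaries.
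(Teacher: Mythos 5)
Your proposal is correct and follows essentially the same route as the paper: combining Theorem~\ref{thm:multiplicative} with the constant-factor policies of Corollary~\ref{cor:inner-intersection-constraints} and the constant adaptivity gap of \cite{multivalue_adaptivity} for prefix-closed (hence downward-closed) outer constraints. Your extra check that restriction preserves the intersection-of-$k$-matroids structure is a sensible detail the paper leaves implicit.
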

\begin{proof}
    By \cite[Theorem 1.2]{multivalue_adaptivity}, we have constant-factor adaptivity gaps for weighted rank functions of the intersection of a constant number of matroids over ``prefix-closed'' constraints, which include all downward-closed constraints. By Corollary \ref{cor:inner-intersection-constraints}, we have constant-factor policies for delegated stochastic probing with no outer constraint and an inner constraint which is the intersection of a constant number of matroids. Combining these results with Theorem \ref{thm:multiplicative}, we get the desired constant factors.
\end{proof}

\section{Lottery Mechanisms}
\label{lottery-mechanisms}

One natural generalization of the delegated stochastic probing model defined in section \ref{formal-definition} is to allow the principal to use randomized mechanisms. For example, one may consider the generalization of single-proposal mechanisms which attaches a probability $p_R$ to each set of outcomes $R \subseteq \Omega_{\text{in}}$, and accepts a proposed set $R$ with precisely that probability (and accepts the empty set otherwise). More general lotteries (i.e. with non-binary support) are also possible. It's then natural to ask whether there exist instances for which some randomized policy does better than all deterministic ones. Even further, we can ask whether there exists a randomized policy that strictly outperforms deterministic ones in the worst case. In other words, can randomization give us a strictly better delegation gap?

In this section, we will broadly discuss randomized mechanisms and then consider the special case of delegation with 1-uniform inner constraints and no outer constraints. In this special case, there exist instances for which randomization significantly helps the principal, and there are worst-case instances in which the delegation gap of $\frac{1}{2}$ is tight for randomized mechanisms as well as deterministic ones. Before getting to these results, we will discuss methods of randomization and then formalize what we mean by a randomized mechanism.

There are two obvious ways that the single-proposal mechanism can be randomized. The first is for the principal to sample a deterministic policy $\mathcal{R}$ from some distribution and then run the single proposal mechanism defined by $\mathcal{R}$. However, noticing that our model of delegation is a Stackelberg game, we can conclude that there always exists a pure optimal strategy for the principal, so this type of randomization doesn't give the principal any more power.

The second type of randomness is for the policy itself to be a set of acceptable distributions over sets of outcomes (i.e. a menu of lotteries), from which the agent may propose one. The principal then samples a set of outcomes from the proposed lottery. This expands the space of mechanisms, conceivably  affording the principal more power in influencing the agent's behavior. We will focus on these randomized mechanisms for the rest of this section.

\begin{definition}
    A \emph{lottery mechanism} is a randomized mechanism for delegated stochastic probing consisting of a set $\mathcal{R}$ of distributions, or lotteries, each with support in $\Omega_{\text{in}}$. After the set of acceptable lotteries $\mathcal{R}$ is selected and announced to the agent, the delegated stochastic probing mechanism proceeds largely the same. The agent probes outcomes $S$ and proposes one of the lotteries $L \in \mathcal{R}$. The principal then samples a set of outcomes $T \sim L$ from that lottery. If $T \subseteq S$, then the principal and agent receive $x(T)$ and $y(T)$ utility, respectively. Otherwise, they both receive $0$ utility.
\end{definition}

We note that this sort of mechanism is a generalized single-proposal mechanism in the sense of Section \ref{single-proposal-mechanisms}: Each lottery represents a solution and an agent's expected utility for a lottery represents their utility for that solution. Therefore, Lemma \ref{lem:single-proposal-mechanisms} applies to lottery mechanisms as well.

\subsection{Power of Lotteries}
\label{power-of-lotteries}

The increased power of lottery mechanisms means that for some instances of delegated stochastic probing there exist lottery policies that provide the principal with a  better expected utility than the best deterministic policies. In fact, we will show that there are instances for which some lottery policies nearly achieve the principal's non-delegated expected utility, while the best deterministic policies achieve only about half of this value.

First, we will make the observation that it never benefits the principal to declare two lotteries in $\mathcal{R}$ with identical support but different distributions. This is because the principal knows the utility function of the agent and can predict which lottery the agent will prefer. Therefore, we can assume that for any given support, the principal will declare at most one lottery.

\begin{proposition}
    \label{prop:lotteries-positive}
    For all $0 < \epsilon < 1$, there exists an instance of delegated stochastic probing for which the best lottery mechanisms achieve $\frac{2 - 3\epsilon + 2\epsilon^2}{2 - \epsilon}$ of the principal's non-delegated expected utility, while the best deterministic mechanisms achieve $\frac{1}{2 - \epsilon}$ of the principal's non-delegated expected utility. As $\epsilon$ approaches $0$, the former approaches $1$ while the latter approaches $\frac{1}{2}$.
\end{proposition}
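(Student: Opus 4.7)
The plan is to exhibit, for each $\epsilon \in (0, 1/2)$, a two-element instance in which both the deterministic bound and the lottery bound are tight. Consider the instance with $1$-uniform inner constraint, no outer constraint, and ground set $E = \set{A, B}$, where $A$ has deterministic pair $(X_A, Y_A) = (\epsilon, 1)$ and $B$ has $(X_B, Y_B) = (1, 1 - 2\epsilon)$ with probability $\epsilon$ and $(0, 0)$ otherwise. The tension is that the agent always strictly prefers $A$ (utility $1$) over even the favorable realization of $B$ (utility $1 - 2\epsilon$), while the principal much prefers the favorable realization of $B$ (utility $1$) over $A$ (utility $\epsilon$). Intuitively, lotteries gain power precisely by letting the principal probabilistically ``punish'' the agent's default choice of $A$.

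First I would compute the non-delegated utility: probing both elements and selecting the maximizer of $X$ yields expected utility $\epsilon \cdot 1 + (1 - \epsilon) \cdot \epsilon = \epsilon(2 - \epsilon)$. For deterministic mechanisms, I would enumerate the possible policies, which amount to subsets of the two non-trivial singletons $\set{A}$ and $\set{B_h}$, where $B_h$ denotes the favorable realization $(B,1,1-2\epsilon)$; the remaining singleton $\set{B_l} = \set{(B,0,0)}$ yields zero utility to both parties and cannot help. Whenever $\set{A} \in \mathcal{R}$, the agent proposes $A$, yielding principal value $\epsilon$; the only remaining non-trivial policy $\mathcal{R} = \set{\set{B_h}}$ gives expected principal value $\epsilon \cdot 1 = \epsilon$ as well. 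Hence the best deterministic utility is $\epsilon$, i.e. $\tfrac{1}{2-\epsilon}$ of the non-delegated value.

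For the lottery construction, I would design the menu $\set{L_1, L_2}$, where $L_1$ accepts the proposal $\set{A}$ with probability $\alpha = 1 - 2\epsilon - \delta$ for an arbitrarily small $\delta > 0$ (to resolve adversarial ties), and $L_2$ accepts $\set{B_h}$ with probability $1$. When $X_B = 1$ the agent strictly prefers $L_2$ (agent value $1 - 2\epsilon$ versus $\alpha$), so the principal receives $1$; when $X_B = 0$ the outcome $\set{B_h}$ is inconsistent with what was probed, so $L_2$ would be rejected by the mechanism, and the agent instead proposes $L_1$ with agent value $\alpha > 0$, giving the principal $\alpha \epsilon$ in expectation. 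Summing, the principal's expected utility is $\epsilon + (1 - \epsilon)\alpha\epsilon$, which as $\delta \to 0^+$ approaches $\epsilon(2 - 3\epsilon + 2\epsilon^2)$, i.e. $\tfrac{2 - 3\epsilon + 2\epsilon^2}{2 - \epsilon}$ of the non-delegated value. Together these calculations prove the two claimed ratios and imply the $\epsilon \to 0$ limits.

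The main obstacle is the matching upper bound that justifies the word ``best'' in the lottery claim. I would invoke a revelation-style reduction: WLOG the menu contains only two lotteries $L^{(0)}, L^{(1)}$, namely the ones the agent actually proposes at $X_B = 0$ and $X_B = 1$, and each can be parameterized by acceptance probabilities $p^{(i)}_A$ and $p^{(i)}_{B_h}$ on the two relevant singletons (masses on $\set{B_l}$ and $\emptyset$ are inert for both parties' utilities here). Writing the principal's objective $\epsilon\bigl(\epsilon p^{(1)}_A + p^{(1)}_{B_h}\bigr) + (1 - \epsilon)\,\epsilon\, p^{(0)}_A$ together with the agent-IC constraints $p^{(1)}_A + (1 - 2\epsilon) p^{(1)}_{B_h} \geq p^{(0)}_A + (1 - 2\epsilon) p^{(0)}_{B_h}$ and $p^{(0)}_A \geq p^{(1)}_A$, one solves a small linear program to show the optimum is precisely $\epsilon(2 - 3\epsilon + 2\epsilon^2)$ for $\epsilon \in (0, 1/2]$. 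This LP argument is elementary but is the step that requires care, because we must rule out more complex menus with several lotteries or mixtures over non-singleton supports.
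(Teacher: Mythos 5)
Your proposal is correct in substance and follows essentially the same route as the paper: a two-element instance with a $1$-uniform inner constraint and no outer constraint, a rare high-value outcome, enumeration of the (few) deterministic menus, and an explicit two-lottery menu that lets the principal shave the agent's utility for the ``safe'' option just below the agent's utility for the rare outcome. The differences are worth noting. First, the paper builds the slack into the agent's utilities themselves (the rare outcome gives the agent $1-\epsilon$ while the safe lottery accepts with probability $1-2\epsilon$), so the agent's preferences are \emph{strict} in both states and the ratio $\frac{2-3\epsilon+2\epsilon^2}{2-\epsilon}$ is attained exactly; your construction instead perturbs the acceptance probability by $\delta>0$, so under the adversarial tie-breaking you allow, the claimed ratio is only approached as $\delta\to 0^+$ rather than achieved --- a small shortfall relative to the stated constant, fixable exactly as the paper does. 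Second, your LP-based upper bound showing the lottery optimum is \emph{exactly} $\epsilon(2-3\epsilon+2\epsilon^2)$ goes beyond the paper, which only proves the lower bound (``at least one lottery policy achieves'' the stated fraction); your revelation-style reduction to two lotteries and the resulting small LP look sound, though as you note this is the step requiring care, and at the LP optimum the agent is indifferent, so attainment again hinges on favorable tie-breaking. Finally, you restrict to $\epsilon\in(0,1/2)$; the proposition is stated for all $\epsilon\in(0,1)$, but the paper's own lottery $B$ uses acceptance probability $1-2\epsilon$ and so implicitly has the same restriction, so this is not a substantive defect of your argument relative to the paper's.
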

\begin{proof}
    Consider an instance with elements $E = \set{1, 2}$, a 1-uniform matroid inner constraint, no outer constraint, and distributions for elements $1$ and $2$ as detailed in Table \ref{tab:lottery-positive}.
    \begin{table}[ht]
        \centering
        \begin{tabular}{c|c c c|c}
            Outcome & Element $e$ & Utility $x$ & Utility $y$ & Probability $\P_e[(x, y)]$ \\
            \hline
            $\omega_0$ & $1$ & $0$ & $0$ & $1 - \epsilon$ \\
            $\omega_1$ & $1$ & $1/\epsilon$ & $1 - \epsilon$ & $\epsilon$ \\
            \hline
            $\omega_2$ & $2$ & $1$ & $1$ & $1$
        \end{tabular}
        \caption{Each row represents a single outcome and contains its name, element $e$, utilities $x$ and $y$, and the probability that it is probed from element $e$.}
        \label{tab:lottery-positive}
    \end{table}
    
    Since there are no outer constraints we assume that both elements are probed. The non-delegating principal can accept $\omega_1$ when it appears and accept $\omega_2$ otherwise. This gives them an expected utility of $\epsilon / \epsilon + 1 - \epsilon = 2 - \epsilon$. By enumerating all deterministic policies, we can confirm that the best such policy gives the delegating principal an expected utility of $1$. Therefore, the best deterministic policy achieves $\frac{1}{2 - \epsilon}$ of the principal's non-delegated utility.
    
    Now consider a lottery policy with lotteries $A$ and $B$ such that $\P_A[\omega_1] = 1$, $\P_B[\omega_2] = 1 - 2 \epsilon$, and $\P_B[\omega_0] = 2 \epsilon$. We can quickly verify that this gives the delegating principal an expected utility of $2 - 3 \epsilon + 2 \epsilon^2$. Therefore, at least one lottery policy achieves $\frac{2 - 3\epsilon + 2\epsilon^2}{2 - \epsilon}$ of the principal's non-delegated utility.
\end{proof}

Unfortunately, there is good reason not to be too optimistic about the increased power of lottery mechanisms. As we will now show, there also exist instances for which the best lottery policies and the best deterministic policies all achieve approximately half of the principal's non-delegated expected utility. Since Corollary \ref{cor:inner-uniform-constraint} gives us a deterministic $\frac{1}{2}$-policy, this tells us that, in the worst case, the factor $\frac{1}{2}$ is tight even for lottery policies in the special case of no outer constraint and a 1-uniform inner constraint.

\begin{proposition}
    \label{prop:lotteries-negative}
    For all $0 < \epsilon < 1$, there exists an instance of delegated stochastic probing with a 1-uniform inner constraint and no outer constraint for which the best lottery policies and the best deterministic policies all achieve $\frac{1}{2 - \epsilon}$ of the principal's non-delegated expected utility. As $\epsilon$ approaches $0$, this approaches $\frac{1}{2}$.
\end{proposition}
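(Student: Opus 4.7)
My plan is to construct a two-element instance in which the agent is entirely indifferent to the state of the principal's rare, high-value outcome, so that lottery mechanisms lose their ability to elicit state-contingent proposals. Concretely, I would take $E = \{1, 2\}$, a $1$-uniform inner constraint, and no outer constraint, with $\mu_1$ supported on $\omega_0 = (1, 0, 0)$ with probability $1 - \epsilon$ and $\omega_1 = (1, 1/\epsilon, 0)$ with probability $\epsilon$, and $\mu_2$ deterministically producing $\omega_2 = (2, 1, 1)$. The non-delegating principal probes both elements and accepts the better realization, attaining expected utility $\epsilon \cdot (1/\epsilon) + (1 - \epsilon) \cdot 1 = 2 - \epsilon$.

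The deterministic upper bound is then an easy enumeration over $\mathcal{R} \subseteq \{\emptyset, \{\omega_0\}, \{\omega_1\}, \{\omega_2\}\}$. The agent's utility for every singleton proposal is $0$ except for $\{\omega_2\}$, for which it is $1$. If $\{\omega_2\} \in \mathcal{R}$, the agent strictly prefers to propose $\{\omega_2\}$ in every realization, yielding the principal exactly $1$. If $\{\omega_2\} \notin \mathcal{R}$, the agent is indifferent among all acceptable proposals and the empty set, and even the tie-breaking most favorable to the principal gives at most $\epsilon \cdot (1/\epsilon) = 1$. So the best deterministic policy attains exactly $1$.

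The central step is bounding the best lottery policy by $1$ as well. A lottery $L \in \mathcal{R}$ is a distribution over $\Omega_{\text{in}} = \{\emptyset, \{\omega_0\}, \{\omega_1\}, \{\omega_2\}\}$; write $p_i^L = L(\{\omega_i\})$. Since $\omega_2$ is the unique outcome giving positive agent utility and lies in the agent's probed set $S$ whenever she probes element $2$, her expected utility from proposing $L$ reduces to $p_2^L$, which is \emph{independent of the realized state}. The agent therefore proposes the same lottery $L^* \in \argmax_{L \in \mathcal{R}} p_2^L$ in every state, and the principal's expected utility from $L^*$ decomposes as
\[
\E\, x(T) \;=\; \epsilon \cdot p_1^{L^*} \cdot (1/\epsilon) + p_2^{L^*} \;=\; p_1^{L^*} + p_2^{L^*} \;\leq\; 1,
\]
with equality attainable by, for instance, a singleton menu containing a lottery with $p_1 + p_2 = 1$ and $p_2 > 0$. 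Combining both bounds with the non-delegated value $2 - \epsilon$ yields a ratio of $1/(2 - \epsilon)$, tending to $1/2$ as $\epsilon \to 0$.

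The point requiring care is the state-independence claim. The agent is indifferent about whether to probe element $1$ at all (its outcomes contribute zero to her utility), so one must also handle the case where she probes only element $2$; but in that case $\omega_1 \notin S$, the principal merely loses the $p_1^{L^*}$ contribution, and the bound only tightens to $p_2^{L^*} \le 1$. Adversarial tie-breaking can likewise depress the principal's value but never push it above $1$. The underlying moral is that when the agent is insensitive to the state of the random element, no lottery can be designed to condition her proposal on that state, and so randomization confers no advantage over deterministic menus.
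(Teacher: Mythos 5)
Your proof is correct, and it uses exactly the paper's instance (the same table of outcomes, the same non-delegated value $2-\epsilon$, the same deterministic enumeration giving $1$). Where you genuinely diverge is the upper bound for lottery policies. The paper proceeds structurally: it argues one may restrict attention to menus of exactly two lotteries $A$ (supported on $\{\omega_0,\omega_2\}$) and $B$ (supported on $\{\omega_1,\omega_2\}$) via mass-shifting arguments, then parametrizes $a=\P_A[\omega_2]$, $b=\P_B[\omega_2]$ and case-analyzes $b\ge a$ versus $b<a$; it also assumes the agent breaks ties in the principal's favor to keep this clean. You instead observe directly that the agent's expected utility from any lottery $L$ equals $p_2^L$ and is state-independent, so the agent's proposal cannot be conditioned on element~$1$'s realization, and the principal's value is bounded by $p_1^L+p_2^L\le 1$ for whichever $p_2$-maximizer is proposed. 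This is shorter, avoids the menu-reduction machinery entirely, and is robust to arbitrary tie-breaking among $p_2$-maximizers (your computation $p_1^L+\epsilon p_2^L+(1-\epsilon)p_2^{L'}=p_1^L+p_2^L\le 1$ when $p_2^L=p_2^{L'}$ covers it), whereas the paper leans on its favorable-tie-breaking assumption. One small point worth making explicit: if no lottery puts mass on $\omega_2$, the agent is indifferent even about probing element~$2$, but then the principal's value is at most $p_1^L\le 1$ regardless, so the bound still holds; and achievability of value $1$ is cleanest via the deterministic menu $\{\{\omega_2\}\}$, which needs no tie-breaking assumption at all.
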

\begin{proof}
    Consider an instance with elements $E = \set{1, 2}$, a 1-uniform matroid inner constraint, no outer constraint, and distributions for elements $1$ and $2$ as detailed in Table \ref{tab:lottery-negative}.
    \begin{table}[ht]
        \centering
        \begin{tabular}{c|c c c|c}
            Outcome & Element $e$ & Utility $x$ & Utility $y$ & Probability $\P_e[(x, y)]$ \\
            \hline
            $\omega_0$ & $1$ & $0$ & $0$ & $1 - \epsilon$ \\
            $\omega_1$ & $1$ & $1/\epsilon$ & $0$ & $\epsilon$ \\
            \hline
            $\omega_2$ & $2$ & $1$ & $1$ & $1$
        \end{tabular}
        \caption{Each row represents a single outcome and contains its name, element $e$, utilities $x$ and $y$, and the probability that it is probed from element $e$. Notice that this instance is identical to the one from Table \ref{tab:lottery-positive} except for the agent's utility for outcome $\omega_1$.}
        \label{tab:lottery-negative}
    \end{table}

    In the case of ties, we assume that the agent prefers to break ties first in the principal's favor and then arbitrarily among any remaining ties. This assumption serves only to simplify the proof, and can be avoided with careful modifications to the utility table.
    
    The non-delegating principal can accept $\omega_1 = (1, 1/\epsilon, 0)$ when it appears and accept $\omega_2 = (2, 1, 1)$ otherwise. This gives them an expected utility of $\epsilon / \epsilon + 1 - \epsilon = 2 - \epsilon$. By enumerating all deterministic policies, we can confirm that the best such policy gives the delegating principal an expected utility of $1$. Therefore, the best deterministic policy achieves $\frac{1}{2 - \epsilon}$ of the principal's non-delegated utility.
    
    Finding the best menu of lotteries takes slightly more work. Since the inner constraint is 1-uniform, each lottery is supported on singletons as well as  the empty set. Recall also that we can restrict attention to menus where no two lotteries have the same support. We claim that we can further restrict attention to menus with exactly two lotteries $A$ and $B$, with $A$ supported on $\set{\omega_0, \omega_2}$ and $B$ supported on $\set{\omega_1, \omega_2}$. To see this, observe that:
    \begin{enumerate}
        \item Shifting all probability mass from the empty set to $\omega_0$ or $\omega_1$ in any lottery does not affect the agent's utility and can only increase the principal's utility. In the case of tie-breaking, the principal's favorite remaining lottery is no worse than before.
        
        \item If there is a lottery with both $\omega_0$ and $\omega_1$ in its support, shifting all probability mass from one of these outcomes to the other does not affect the agent's utility, and in at least one direction this shift of probability mass will make the policy no worse for the principal. Again, in the case of tie-breaking, the principal's favorite remaining lottery is no worse than before.
        
        \item A menu without lottery $A$ is no better than the same menu with lottery $A$ for which all probability mass of $A$ is assigned to $\omega_0$. Similarly, a menu without lottery $B$ is no better than the same menu with lottery $B$ for which all probability mass of $B$ is assigned to $\omega_1$.
    \end{enumerate}
    
    Parametrizing the probability of each outcome, we get: $\P_A[\omega_2] = a$, $\P_A[\omega_0] = 1 - a$, $\P_B[\omega_2] = b$, and $\P_B[\omega_1] = 1 - b$ for some $a, b \in [0, 1]$. No matter what the agent probes ($\set{\omega_0, \omega_2}$ or $\set{\omega_1, \omega_2}$), their favorite lottery is $B$ if $b \ge a$ and $A$ otherwise. If we choose $b \ge a$, the delegating principal gets expected utility $\epsilon (b + (1 - b) / \epsilon) + (1 - \epsilon) \cdot b = 1$. Otherwise, the principal gets $\epsilon \cdot a + (1 - \epsilon) \cdot a = a$, which can be made as large as $1$ for $a = 1$. Therefore, the best lottery policy achieves $\frac{1}{2 - \epsilon}$ of the principal's non-delegated utility.
\end{proof}

\section{Open Questions}
\label{open-questions}

Due to this novel combination of delegation with stochastic probing, we believe that this paper ultimately opens up many more questions than it answers. In this section, we will make some of these questions explicit.
\begin{itemize}
    \item While we focused on the existence of constant-factor delegation policies regardless of their computational complexity, applying these solutions to practical problems requires some guarantee that they can be easily computed and represented. Are there delegated stochastic probing problems for which constant-factor policies are NP-hard to compute in general? Are there special cases for which constant-factor policies can always be computed in polynomial time?
    
    \item In Section \ref{lottery-mechanisms}, we showed that the constant given in Corollary \ref{cor:inner-uniform-constraint} is tight. Are the other factors given in Section \ref{inner-constraint-delegation} tight? We note that this is related to an open question by \cite{OCRS_prophet} about $\frac{1}{2}$ prophet inequalities on matroids against the almighty adversary.
    
    \item Are the constant factors given in Section \ref{outer-constraint-delegation} tight? Due to the broad applicability of adaptivity gaps, our method is unlikely to take advantage of special structure that may be present in delegated stochastic probing problems. Therefore, it seems probable that better constants exist, but we make no claim to a conjecture.
    
    \item Our model assumes that probing is always zero-cost, so it doesn't generalize the binary model from \cite{KK18} or the box problem of \cite{weitzman79}. It's natural to ask whether we can get constant-factor delegation gaps with probing costs in addition to (or as a replacement for) outer constraints.
    
    \item Our model doesn't allow the principal to incentivize the agent with transfers (such as payments), so it's natural to ask how such an extension to the model could improve the principal's worst-case guarantees.
    
    \item If the principal is delegating to multiple agents simultaneously, can they get better worst-case guarantees than delegating to a single agent? We note that there are many ways to define this formally. For example, a stronger principal may be able to define different acceptable sets for each agent whereas a weaker principal may be forced to declare one acceptable set for all agents.
    
    \item It's not hard to imagine practical applications of stochastic probing for which elements are not independently distributed. Can we get competitive guarantees even in the absence of the independence assumption?
\end{itemize}

\bibliographystyle{abbrvnat}
\bibliography{stochastic}

\appendix

\section{Appendix}
\label{appendix}

\subsection{Symmetric Delegation Policies}
\label{symmetric-delegation-policies}

While our model is not a direct generalization of the distributional model used by Kleinberg and Kleinberg, we can obtain a generalization by considering delegated stochastic probing with a restricted class of policies, which we call symmetric policies. Given this variant, we can recover the $\frac{1}{2}$ factor that they obtained. First, we need to define some notation and terminology.

Given any object $X$ (such as a set, tuple, or recursive combination of the two) containing atomic elements $E$, we can consider the operation of taking two elements $e_1, e_2 \in E$ and swapping all instances of $e_1$ and $e_2$ in $X$. More generally, for any permutation $\pi$ of elements in $E$, we can consider rewriting all elements $e$ to $\pi(e)$ simultaneously. We will denote the object obtained from this operation as $X[E \to \pi(E)]$.

\begin{definition}
    Fix an instance of delegated stochastic probing with elements $E$, outer constraint $\mathcal{M}_{\text{out}}$, and inner constraint $\mathcal{M}_{\text{in}}$. We say that a subset of elements $F \subseteq E$ are \emph{symmetric} if $\mu_e = \mu_f$ for all $e, f \in F$ and for all permutations $\pi$ on $F$ we have that $\mathcal{M}_{\text{in}}[F \to \pi(F)] = \mathcal{M}_{\text{in}}$ and $\mathcal{M}_{\text{out}}[F \to \pi(F)] = \mathcal{M}_{\text{out}}$.
\end{definition}

\begin{definition}
    Fix an instance of delegated stochastic probing with elements $E$, outer constraint $\mathcal{M}_{\text{out}}$, and inner constraint $\mathcal{M}_{\text{in}}$. We say that a policy $\mathcal{R}$ is \emph{symmetric} if $\mathcal{R}[F \to \pi(F)] = \mathcal{R}$ for all symmetric sets of elements $F \subseteq E$ and all permutations $\pi$ on $F$.
\end{definition}

Intuitively, symmetric elements are ones which are identical in everything except name. Then symmetric policies are ones that don't distinguish between symmetric elements. Using this intuition, we will now consider the problem of delegated stochastic probing with $k$ identically distributed elements $E$, a 1-uniform inner constraint, and no outer constraint. Given any such instance, it's easy to see that all elements $E$ are symmetric. Notice the similarity between such an instance and the distributional model. The only difference is that our principal has the power to distinguish between outcomes sampled from different elements. However, if the principal is restricted to symmetric policies, then their policy cannot distinguish between different elements, so it must characterize acceptable outcomes based only on their $(x, y)$ utility. This is equivalent to the distributional model.

There are also natural definitions of symmetric elements and strategies in the prophet inequality problem.

\begin{definition}
    Fix an instance of the prophet inequality problem with elements $E$ and feasibility constraint $\mathcal{M}$. We say that a subset of elements $F \subseteq E$ are \emph{symmetric} if $X_e$ and $X_f$ are identically distributed for all $e, f \in F$ and for all permutations $\pi$ on $F$ we have that $\mathcal{M}[F \to \pi(F)] = \mathcal{M}$.
\end{definition}

\begin{definition}
    Fix an instance of the prophet inequality problem with elements $E$ and feasibility constraint $\mathcal{M}$. We say that a strategy $\mathcal{A}$ is \emph{symmetric} if $\mathcal{A}[F \to \pi(F)] = \mathcal{A}$ for all symmetric sets of elements $F \subseteq E$ and all permutations $\pi$ on $F$.
\end{definition}

Given these definitions, we will show that Theorem \ref{thm:inner-reduction} actually transforms symmetric greedy prophet inequalities against the almighty adversary into symmetric delegation policies. This is stated formally in Proposition \ref{prop:symmetric-reduction}.

\begin{proposition}
    \label{prop:symmetric-reduction}
    Given an instance $I = (E, \mathcal{M}^*, \mathcal{M}_{\text{in}})$ of delegated stochastic probing without outer constraints, let $J$ be an instance of the prophet inequality problem with random variables $X_e$ for all $e \in E$ and constraint $\mathcal{M}_{\text{in}}$. If there exists a symmetric $\alpha$-factor greedy strategy for $J$ against the almighty adversary, then there exists a symmetric deterministic $\alpha$-policy for $I$. Furthermore, the proof is constructive when given the strategy for $J$.
\end{proposition}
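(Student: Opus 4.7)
The plan is to piggyback directly on the construction from Theorem \ref{thm:inner-reduction} and simply verify that the additional symmetry property is preserved by that construction. Recall that theorem defines the delegation policy from the greedy strategy $\mathcal{A}$ by
\begin{equation*}
    \mathcal{R} = \set{\set{(e, x, y) : (e, x) \in A, y \in \mathbb{R}_+} : A \in \mathcal{A}}.
\end{equation*}
The $\alpha$ approximation guarantee is then inherited verbatim from Theorem \ref{thm:inner-reduction}, so the only new content is showing that $\mathcal{R}$ is symmetric whenever $\mathcal{A}$ is.

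First, I would observe that symmetry transfers from the delegation instance $I$ to the prophet instance $J$. Concretely, if $F \subseteq E$ is a symmetric set in $I$, then $\mu_e = \mu_f$ for all $e, f \in F$, which forces the marginals $X_e$ and $X_f$ of the principal's utility to be identically distributed; and the invariance of $\mathcal{M}_{\text{in}}$ under any permutation $\pi$ on $F$ is literally the matroid-invariance condition required for $F$ to be symmetric in $J$ (whose constraint is $\mathcal{M}_{\text{in}}$ as well). Hence any symmetric set $F$ of $I$ is a symmetric set of $J$, and so by hypothesis $\mathcal{A}[F \to \pi(F)] = \mathcal{A}$ for every permutation $\pi$ on $F$.

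Next I would unpack what applying $\pi$ to $\mathcal{R}$ means: the notation $\mathcal{R}[F \to \pi(F)]$ replaces each outcome $(e, x, y)$ appearing in some $R \in \mathcal{R}$ by $(\pi(e), x, y)$, leaving the utility coordinates untouched. Applying this rewrite to the explicit formula above yields
\begin{equation*}
    \mathcal{R}[F \to \pi(F)] = \set{\set{(\pi(e), x, y) : (e, x) \in A, y \in \mathbb{R}_+} : A \in \mathcal{A}},
\end{equation*}
which is exactly the image of $\mathcal{A}[F \to \pi(F)]$ under the construction of $\mathcal{R}$ from $\mathcal{A}$. Since $\mathcal{A}[F \to \pi(F)] = \mathcal{A}$, the outer set of $A$'s is unchanged, so $\mathcal{R}[F \to \pi(F)] = \mathcal{R}$, establishing symmetry of $\mathcal{R}$.

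There is no genuine obstacle here; the proposition is essentially a bookkeeping refinement of Theorem \ref{thm:inner-reduction}. The one point that requires some care is the unpacking of the element-renaming notation $[F \to \pi(F)]$ when it is applied to a nested object like $\mathcal{R}$ (a family of sets of triples), and verifying that the renaming commutes with the ``lift'' operation $A \mapsto \set{(e, x, y) : (e, x) \in A, y \in \mathbb{R}_+}$ that turns a gambler-acceptable set into a delegation-acceptable set. This commutation is immediate because the lift only touches the $y$-coordinate and is agnostic to the element names, so it does not interact with $\pi$ at all.
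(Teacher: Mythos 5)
Your proposal is correct and follows the same route as the paper: reuse the construction and approximation guarantee from Theorem \ref{thm:inner-reduction}, then observe that the map $\mathcal{A} \mapsto \mathcal{R}$ commutes with element renaming, so symmetry of $\mathcal{A}$ yields symmetry of $\mathcal{R}$. Your added check that symmetric sets of $I$ are symmetric sets of $J$ (via $\mu_e = \mu_f$ implying identically distributed marginals $X_e$, $X_f$) is a detail the paper leaves implicit, but it is the same argument.
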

\begin{proof}
    The proof is identical to the proof of Theorem \ref{thm:inner-reduction}, but we observe that the greedy strategy $\mathcal{A}$ for prophet inequality problem $J$ is symmetric, so the policy $\mathcal{R}$ derived from $\mathcal{A}$ must also be symmetric by construction.
\end{proof}

Since the $\frac{1}{2}$ prophet inequality used in Corollary \ref{cor:inner-uniform-constraint} is a threshold policy, it must be symmetric. Therefore, we have a symmetric $\frac{1}{2}$-policy for delegated stochastic probing problems with no outer constraint and a 1-uniform inner constraint. This recovers a $\frac{1}{2}$-factor for the distributional model of \cite{KK18}, as well as for the slight generalization of this model with multiple distributions and a separate cardinality constraint for each one.

\end{document}